\documentclass[journal]{IEEEtran}

\usepackage{amsmath}
\usepackage{amssymb}
\usepackage{amsfonts}
\usepackage{amsthm}
\usepackage{esint}
\usepackage[hidelinks]{hyperref}
\usepackage{empheq}
\usepackage{cite}
\usepackage{xcolor}
\usepackage{dsfont} 
\usepackage[mathscr]{euscript} 
\usepackage{enumerate}
\usepackage{scrextend}
\usepackage{centernot}
\usepackage[shortlabels]{enumitem}
\usepackage{xfrac}
\usepackage[normalem]{ulem}

\usepackage{tikz}
\usepackage{graphicx}
\usepackage{color}
\usetikzlibrary{intersections}
\usetikzlibrary{calc}

\ifCLASSOPTIONcompsoc
\usepackage[caption=false,font=normalsize,labelfont=sf,textfont=sf]{subfig}
\else
\usepackage[caption=false,font=footnotesize]{subfig}
\fi

\newtheorem{theo}{Theorem}
\newtheorem{lem}{Lemma}
\newtheorem{cor}{Corollary}

\theoremstyle{definition}
\newtheorem{defi}{Definition}

\theoremstyle{remark}

\newcommand{\bos}[1]{\boldsymbol{#1}}

\newcommand{\da}{\downarrow}
\newcommand{\pr}{\mathcal{P}^\da_d}

\DeclarePairedDelimiterX{\infdivx}[2]{(}{)}{%
  #1\;\delimsize\|\;#2%
}
\newcommand{\rel}{D \infdivx}

\begin{document}

\title{Majorization precursors to supermodularity and subadditivity on the majorization lattice}

\DeclareRobustCommand{\IEEEauthorrefmark}[1]{\smash{\textsuperscript{\footnotesize #1}}}
\author{\vspace{10pt}
    \IEEEauthorblockN{Alexander St\'evins\IEEEauthorrefmark{1}, Michael G. Jabbour\IEEEauthorrefmark{2,1}, Serge Deside\IEEEauthorrefmark{1}, and Nicolas J. Cerf\IEEEauthorrefmark{1}~\\~\\}
    \IEEEauthorblockA{\IEEEauthorrefmark{1}{\small Centre for Quantum Information and Communication, \'Ecole polytechnique de Bruxelles,~\\ CP 165/59, Universit\'e libre de Bruxelles, 1050 Brussels, Belgium}~\\}
    \IEEEauthorblockA{\IEEEauthorrefmark{2}{\small SAMOVAR, Télécom SudParis, Institut Polytechnique de Paris, 91120 Palaiseau, France}}
}

\maketitle

\begin{abstract}
    We establish two structural majorization relations, which we call \textit{precursors}, underlying the properties of supermodularity and subadditivity on the lattice induced by majorization.
    These are precursors in that  they immediately imply that all sums of concave functions, which we dub \textit{sum-concave} functions, are supermodular and subadditive on the majorization lattice.
    Using these majorization relations, we then show the supermodularity and subadditivity (in the lattice-theoretic sense) of Tsallis entropies (for all $\alpha$) and R\'enyi entropies (for all $\alpha > 1$), also recovering these properties for the Shannon entropy in the process.
    We further strengthen these inequalities, showing that: (i) all these entropic functionals are \emph{strictly} subadditive on the majorization lattice; (ii) Tsallis entropies (and therefore the Shannon entropy as well) are \emph{strictly} supermodular on the majorization lattice.
\end{abstract}

\begin{IEEEkeywords}
Majorization lattice, Shannon entropy, R\'enyi entropy, Tsallis entropy, supermodularity, subadditivity.
\end{IEEEkeywords}

\section{Introduction}

Majorization theory is a rich area of mathematics which finds applications in a wide variety of fields, extending in particular to information theory~\cite{Majorization}. The majorization relation is a pre-order on the set of probability mass functions, and is thus intimately connected with various measures of ``uncertainty" or ``disorder" in information theory. Notably, being a paradigmatic measure of uncertainty, the Shannon entropy enjoys such a connection:
a majorization relation between two probability mass functions necessarily implies an inequality on their Shannon entropies. A similar connection holds for other entropic functions, such as the Tsallis or Rényi entropies. Majorization theory can thus been used as a powerful tool to prove inequalities in information theory, for instance continuity bounds for entropies, see e.g.~\cite{HansonDatta2022Entropies,Jabbour2022,Jabbour2024}.
In the last two decades, majorization has been shown to induce a lattice structure~\cite{Cicalese2002}, which guarantees the existence and uniqueness of a greatest lower bound (i.e., the \textit{meet} $\wedge$) and a least upper bound (i.e., the \textit{join} $\vee$) of any two probability mass functions for the majorization order. In this sense, the lattice provides a natural extended framework for majorization theory, allowing entropic functions to exhibit new lattice-theoretic properties. For instance, the Shannon entropy was shown to be supermodular and subadditive (in the lattice-theoretic sense) on the majorization lattice~\cite{Cicalese2002}.

In this paper, we exploit the formalism of the majorization lattice to show that a larger family of entropies satisfies the properties of subadditivity and supermodularity on the majorization lattice. We first focus on the family of functions we dub \textit{sum-concave} functions, which are defined as sums of concave functions. Such a family includes Tsallis entropies (and in particular the Shannon entropies). By studying the subadditivity and supermodularity of \textit{sum-concave} functions on the majorization lattice, we are moreover able to show such properties for R\'enyi entropies. In fact, we show that the properties can be strengthened in various cases using a Kullback–Leibler divergence.

\medskip

\noindent
{\bf{Layout of the paper and summary of our contributions:}} Relevant notations and definitions are given in Section~\ref{sec:pre}.
In Section~\ref{sec:precursors}, our main results are given, consisting of two structural majorization relations -- which we call \textit{majorization precursors} -- on the majorization lattice. We prove that for any two distributions $\bos{p}, \bos{q} \in \pr$, the structural majorization relations $(\bos{p} \wedge \bos{q}) \oplus (\bos{p} \vee \bos{q}) \prec \bos{p} \oplus \bos{q}$ (Theorem~\ref{th:concatenation}) as well as $\bos{p} \oplus \bos{q} \prec (\bos{p} \wedge \bos{q}) \oplus (1, 0, \dots, 0)$ (Theorem~\ref{th:subadditive_concatenation}) hold, which yields upper and lower majorization bounds on the direct sum  $\bos{p} \oplus \bos{q}$. These two results directly imply the supermodularity and subadditivity (in the lattice-theoretic sense) of a family of functions (which we call \textit{sum-concave} functions), greatly simplifying the proofs of such properties for various entropic functions.
In Section~\ref{sec:entropies}, we demonstrate the usefulness of these majorization precursors by proving the supermodularity and subadditivity of the  Tsallis entropies on the majorization lattice, recovering these properties for the Shannon entropy as a special case, and prove that these inequalities can even be further tightened by providing explicit correction terms. Moreover, we show the submodularity and log-submodularity of $\ell_p$ norms for $p > 1$ on the majorization lattice, which we use to prove the supermodularity (for $\alpha > 1$) and the subadditivity (for $\alpha \geq 0$) of R\'enyi entropies on the majorization lattice.
Finally, Section~\ref{sec:conclusion} contains a final discussion and some future research directions.

\section{Preliminaries\label{sec:pre}}

Given $\bos{p} \in \mathbb{R}^d$ for some finite dimension $d$, let $\bos{p}^{\da} \in \mathbb{R}^d$ be the vector containing the elements of $\bos p$ arranged in non-increasing order. For $\bos{p}, \bos{q} \in \mathbb{R}^d$, we say $\bos{p}$ is majorized by $\bos{q}$, (denoted as $\bos{p} \prec \bos{q}$)~\cite{Majorization}, if
\begin{equation} \label{eq:majorization}
    \sum_{j=1}^k p^{\da}_j \leq \sum_{j=1}^k q^{\da}_j, \; \forall k = 1, \dots, d-1, \; \text{and} \; \sum_{j=1}^d p^{\da}_j = \sum_{j=1}^d q^{\da}_j.
\end{equation}
Note that throughout, when the dimensions of $\bos p$ and $\bos q$ do not match, we append zeros to the shortest distribution until the dimensions match. The majorization relation is related to the notion of order in the sense that the relation $\bos{p} \prec \bos{q}$ can be understood as checking whether an element $\bos{q}$ concentrates more weight in its biggest elements than $\bos{p}$. When such a relation is satisfied, one can say that $\bos{p}$ is less certain than $\bos{q}$, or more ordered. For easiness of notation, we denote the cumulative sums as
\begin{equation}
    S^\da_k(\bos{p}) = \sum_{j=1}^k p^{\da}_j.
\end{equation}
Moreover, we also focus on probability vectors, i.e. vectors from the set \begin{equation}
    \mathcal{P}_d \coloneqq \left\{\bos x \in \mathbb{R}^d | x_i \geq 0 , \sum_i x_i = 1\right\}.
\end{equation}
The majorization relation does not constitute a total order on probability vectors but only a preorder, meaning that there can be pairs of vectors $\bos{p}, \bos{q} \in \mathcal{P}_d$ such that
neither $\bos{p} \prec \bos{q}$ nor $\bos{q} \prec \bos{p}$ is true.
Such pairs of vectors are said to be \emph{incomparable} (denoted $\bos{p} \nsim \bos{q}$). Instead, when at least $\bos{p} \prec \bos{q}$ or $\bos{q} \prec \bos{p}$ is true, we say that the pair of vectors is \emph{comparable} (denoted $\bos p \sim \bos q$). When restricted to the set of probability vectors rearranged in non-increasing order defined as
\begin{equation}
    \pr \coloneqq \left\lbrace \bos{p} \in \mathcal{P}_d | p_k \geq p_{k+1}, \forall k = 1, \dots, d-1 \right\rbrace,
\end{equation}
the majorization preorder becomes a partial ordering on $\pr$ (since the conditions $\bos{x} \prec \bos{y}$ and $\bos{y} \prec \bos{x}$ together imply $\bos{x} = \bos{y}$ in that case), which is thus a partially ordered set, or poset for short~\cite{Davey2002}.

Now, a function $f:A \subseteq \mathbb{R}^d \to \mathbb{R}$ is said to be Schur-convex if $\bos{p} \prec \bos{q}$ implies $f(\bos{p}) \leq f(\bos{q})$ for all $\bos{p}, \bos{q} \in A$, and it is said to be Schur-concave if $(-f)$ is Schur-convex. The class of functions of the form
\begin{equation} \label{eq:sum-cc}
    F(\bos{x}) = \sum_{i} \varphi(x_i),
\end{equation}
where $\varphi$ is either convex or concave,
appears at several points in this work, and so depending on the convexity/concavity of $\varphi$, we call $F$ \textit{sum-convex} and \textit{sum-concave}, respectively.

The following Lemma is a fundamental result in majorization theory, showing that what we call sum-convex (resp. sum-concave) functions are all Schur-convex (resp. Schur-concave).
\begin{lem}[Schur, Hardy, Littlewood and P\'olya~\cite{Majorization}] \label{lem:karamata}
    Let $f$ be a real-valued convex function on $[0, 1]$. If $\bos{p} \prec \bos{q} \in \mathcal{P}_d$, then
    \begin{equation}
        \sum_{i=1}^{d} f(p_i) \leq \sum_{i=1}^{d} f(q_i).
    \end{equation}
    This can be restated as saying that the function $F = \sum f$ is Schur-convex.
\end{lem}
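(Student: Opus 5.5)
The plan is to reduce the inequality to a statement about partial sums of the sorted vectors, using Abel summation together with the monotonicity of the slopes of a convex function. Since both sides of the claimed inequality are invariant under permutations of the entries, I may assume $\bos p=\bos p^\da$ and $\bos q=\bos q^\da$. The majorization hypothesis then reads $S^\da_k(\bos p)\le S^\da_k(\bos q)$ for $k<d$, with equality at $k=d$. The structure of the argument is to write $f(q_i)-f(p_i)$ as a slope times the difference $q_i-p_i$, and then exploit that these slopes are ordered.

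\textbf{Step 1 (slopes).} For each $i$, define the chord slope
\begin{equation*}
c_i=\frac{f(q_i)-f(p_i)}{q_i-p_i}\quad\text{if } q_i\neq p_i.
\end{equation*}
When $q_i=p_i$, take $c_i$ to be a one-sided derivative of $f$ at $p_i$. At the endpoints $0$ and $1$ a one-sided derivative may be infinite; in that case I would use any value lying between the neighbouring chord slopes, which is always possible. By convexity, the chord slope of $f$ over $[a,b]$ is nondecreasing in both endpoints $a$ and $b$. Since both $(p_i)$ and $(q_i)$ are nonincreasing in $i$, the sequence $c_i$ is nonincreasing in $i$. This is the key step, and the only one needing care: the degenerate cases $q_i=p_i$ and the endpoint behaviour must be handled so that the monotonicity of $(c_i)$ is not broken. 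I would handle them with the three-chord lemma, which says that any one-sided derivative of $f$ at a point lies between the chord slopes to its left and to its right.

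\textbf{Step 2 (Abel summation).} Set $D_k=S^\da_k(\bos q)-S^\da_k(\bos p)$, so that $D_k\ge 0$ for $k<d$, $D_d=0$ and $D_0=0$. Then
\begin{equation*}
\sum_{i=1}^d \bigl(f(q_i)-f(p_i)\bigr)=\sum_{i=1}^d c_i (D_i-D_{i-1})=\sum_{k=1}^{d-1} D_k (c_k-c_{k+1}) + c_d D_d.
\end{equation*}
The last term vanishes because $D_d=0$. Each remaining term is a product of two nonnegative numbers, since $D_k\ge 0$ and $c_k\ge c_{k+1}$ by Step 1. Hence the whole sum is nonnegative, which is exactly $\sum_i f(p_i)\le\sum_i f(q_i)$.

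\textbf{Step 3 (Schur-convexity).} The reformulation as Schur-convexity of $F=\sum f$ is immediate from Step 2, because $F$ is invariant under permutations of its arguments. Applying the same argument to $-f$ gives the sum-concave (Schur-concave) counterpart used later in the paper.
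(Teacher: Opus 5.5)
The paper does not prove this lemma; it quotes it as a classical result from the majorization literature, so there is no in-paper argument to match. Your proof is correct: it is Karamata's Abel-summation argument.

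The individual steps are sound:
\begin{itemize}
\item the reduction to sorted vectors;
\item the monotonicity of the chord slopes $c_i$, which follows from the three-chord lemma because $\min(p_i,q_i)$ and $\max(p_i,q_i)$ are both nonincreasing in $i$;
\item the summation by parts against $D_k\ge 0$ with $D_0=D_d=0$.
\end{itemize}

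The degenerate indices are easier than you make them. Since $q_i-p_i=0$ there, $c_i$ never enters the original sum, so you may fill in any values that keep the sequence monotone, and one-sided derivatives are unnecessary. If you do use them, choose them consistently. Taking $c_i=f'_-(x)$ and $c_{i+1}=f'_+(x)$ at a repeated kink $p_i=q_i=p_{i+1}=q_{i+1}=x$ would make $c_i<c_{i+1}$ while $D_i$ may be positive. The term-by-term sign argument would then fail.

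The other textbook route is closer in spirit to the paper's own use of $T$-transforms in Theorem~\ref{th:concatenation}. You invoke the Hardy--Littlewood--P\'olya characterization $\bos{p}=P\bos{q}$ with $P$ doubly stochastic (equivalently, a product of $T$-transforms). Jensen's inequality row by row gives $f(p_i)\le\sum_j P_{ij}f(q_j)$, and summing over $i$ with unit column sums finishes. That is a two-liner once the characterization is available, but the characterization is the nontrivial part. Yours works directly from the partial-sum definition~\eqref{eq:majorization} and uses only elementary convexity.

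Neither argument uses that the total mass is $1$, only that the masses agree and the entries lie in $[0,1]$. That is exactly what is needed when the paper applies the lemma to the $2d$-dimensional vectors $\bos{A}$, $\bos{B}$, $\bos{C}$ of total mass $2$.
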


Since $f$ is concave if $-f$ is convex, one can also use Lemma~\ref{lem:karamata} on concave functions, the inequality simply being reversed. Notable examples of Schur-concave functions are entropies, which is expected since entropies, just like majorization, can be interpreted as tools to compare the uncertainty of probability distributions. The most common entropy in information theory is the Shannon entropy, introduced by Shannon in Ref.~\cite{Shannon} as a fundamental quantity for his famous data compression and noisy coding theorems.
The Shannon entropy of a probability distribution $\bos{p} \in \mathcal{P}_d$ is defined as
\begin{equation}
    H(\bos{p}) \coloneqq - \sum_{i=1}^d p_i \log p_i,
\end{equation}
whose units are \textit{bits} if the logarithm is taken in base 2, and \textit{nats} when the logarithm is taken in base $e$. We take all logarithms to base 2 in this work.
A comprehensive overview of the Shannon entropy and its applications can be found in Ref.~\cite{CoverThomas}. The Shannon entropy is clearly sum-concave, as can be seen by taking $\varphi(x) = -x \log x$ in the definition, and is thus Schur-concave by Lemma~\ref{lem:karamata}. A related quantity is the Kullback-Leibler divergence (or relative entropy) $D$, defined as
\begin{equation}
  \rel{\bos{p}}{\bos{q}} \coloneqq   \sum_{i:p_i>0} p_i \log \frac{p_i}{q_i},
\end{equation}
whenever $\operatorname{supp}(\bos p)\subseteq \operatorname{supp}(\bos q)$. If this support condition is not satisfied, we set $\rel{\bos p}{\bos q}:=+\infty$. In the cases considered below, this support condition is always satisfied.

Several generalizations of the Shannon entropy have been proposed over the years, such as the so-called R\'enyi entropies $H_\alpha$~\cite{Renyi1961}. which are widely used in physics and quantum information to characterize entanglement catalysis~\cite{Turgut2007, Klimesh2007, Aubrun2008}.
To introduce them, for any vector $\bos x \in \mathbb{R}^d_+$ we define
\begin{equation} \label{eq:ell_p}
    ||\bos{x}||_p \coloneqq \left(\sum_{i=1}^d x_i^p\right)^{\frac{1}{p}}.
\end{equation}
For $p \geq 1$, Eq.~(\ref{eq:ell_p}) defines the $\ell_p$ norm of the vector $\bos x$. For $p < 1$, Eq.~(\ref{eq:ell_p}) defines a quasinorm instead as it fails the triangle inequality. The R\'enyi entropy of order $\alpha \in \mathbb{R}^+ \backslash \{1\}$ of a probability distribution $\bos p \in \mathcal{P}_d$ is then defined as

\begin{equation}
    H_\alpha(\bos{p}) \coloneqq \frac{1}{1-\alpha}\log(||\bos{p}||_\alpha^\alpha).
\end{equation}
\noindent The Shannon entropy can be recovered from the R\'enyi entropy by taking the limit as $\alpha \rightarrow 1$. 

More recently, the so-called Tsallis entropy $T_\alpha$ was introduced, as a non-additive generalization of the Gibbs-Boltzmann entropy in thermodynamics~\cite{Tsallis1988}, and has been exploited in various fields, such as entanglement detection~\cite{Canosa2002} or the study of non-extensive systems~\cite{Lyra1998}.
The Tsallis entropy of order $\alpha \in \mathbb{R}^+ \backslash \{1\}$ of a distribution $\bos p \in \mathcal{P}_d$ is defined as

\begin{equation}
    T_\alpha(\bos{p}) \coloneqq \frac{1}{1 - \alpha} \left(||\bos{p}||_\alpha^\alpha - 1\right).
\end{equation}
Again, the Shannon entropy can be recovered from the Tsallis entropy by taking the limit as $\alpha \rightarrow 1$. Tsallis entropies are sum-concave functions for all values of $\alpha$, as can be seen by choosing $\varphi(x) = \frac{1}{1-\alpha}(x^\alpha - x)$ in Eq.~(\ref{eq:sum-cc}).

As discussed previously, some pairs of vectors are incomparable, and in that case the usual majorization properties (e.g., Schur-convexity) cannot be used to deduce anything about the relationship between the two states. However, Cicalese and Vaccaro showed that the majorization pre-order induces a lattice structure~\cite{Cicalese2002}, which guarantees that when two states are incomparable, one can still find a well-defined greatest lower bound and least upper bound, which we can then compare to our states using usual majorization properties.
\begin{defi}[Majorization lattice~\cite{Cicalese2002}]
    The majorization lattice is a quadruple $\left< \mathcal{P}^\da_{d}, \prec, \wedge, \vee \right>$, where the two operations $\wedge$ and $\vee$ are defined as follows:
    \begin{enumerate}[(i)]
        \item The \emph{meet} of two elements $\bos{p}, \bos{q} \in \pr$, denoted $\bos{p} \wedge \bos{q}$, is the unique element in $\pr$ such that, for any $\bos{r} \in \pr$ satisfying both $\bos{r} \prec \bos{p}$ and $\bos{r} \prec \bos{q}$, we have\footnote{Intuitively, the meet is thus the \textit{greatest common majorized} distribution, since it majorizes all other distributions that are majorized by both $\bos{p}$ and $\bos{q}$, i.e. the greatest lower bound for the majorization partial order.} $\bos{r} \prec \bos{p} \wedge \bos{q}$.
        \item The \emph{join} of two elements $\bos{p}, \bos{q} \in \pr$, denoted $\bos{p} \vee \bos{q}$, is the unique element in $\pr$ such that, for any $\bos{r} \in \pr$ satisfying both $\bos{p} \prec \bos{r}$ and $\bos{q} \prec \bos{r}$, we have\footnote{Similarly, the join is thus the \textit{least common majorizer}, i.e. the least upper bound for the majorization partial order.} $\bos{p} \vee \bos{q} \prec \bos{r}$.
    \end{enumerate}
\end{defi}

A schematic representation of the majorization lattice can be found in Figure~\ref{fig:lattice}. Explicitly, the meet $\bos{p} \wedge \bos{q}$, which we denote by $\bos{m}$ throughout for easiness of notation, can be constructed as follows. We define $\bos{\alpha}(\bos{p}, \bos{q}) = (a_1, a_2, \dots, a_d)$ as
\begin{align}
    a_i &= \min\left\{\sum^i_{k=1} p_k, \sum^i_{k=1} q_k\right\} - \min\left\{\sum^{i-1}_{k=1} p_k, \sum^{i-1}_{k=1} q_k\right\}\label{eq:alpha}\\
        &= \min\left\{\sum^i_{k=1} p_k, \sum^i_{k=1} q_k\right\} - a_{i-1}, \label{eq:alpha_bis}
\end{align}
then $\bos{p} \wedge \bos{q} = \bos{\alpha}(\bos{p}, \bos{q})$. For the join $\bos{p} \vee \bos{q}$, which we denote by $\bos{j}$ throughout, one proceeds similarly.  We define $\bos{\beta}(\bos{p}, \bos{q}) = (b_1, b_2, \dots, b_d)$ as
\begin{align}
    b_i &= \max\left\{\sum^i_{k=1} p_k, \sum^i_{k=1} q_k\right\} - \max\left\{\sum^{i-1}_{k=1} p_k, \sum^{i-1}_{k=1} q_k\right\}\label{eq:beta}\\
        &= \max\left\{\sum^i_{k=1} p_k, \sum^i_{k=1} q_k\right\} - b_{i-1}. \label{eq:beta_bis}
\end{align}
It is not guaranteed at this stage that $\bos{\beta}(\bos{p}, \bos{q})$ is sorted in non-increasing order, so one needs to smooth the convex dents in the Lorenz curve to obtain the true join
. For the purposes of this article it is enough to know that $\bos{p} \vee \bos{q} \prec \bos{\beta}^\da(\bos{p}, \bos{q})$. An explicit construction is detailed in Ref.~\cite{Cicalese2002}.

We now list some useful properties of functions on the majorization lattice.
\begin{defi}[Supermodularity]
    A function $\varphi : \mathcal{L} \rightarrow \mathbb{R}$ is supermodular over a lattice $\left< \mathcal{L}, \prec, \wedge, \vee \right>$ if and only if, for all $\bos{a}, \bos{b} \in \mathcal{L},$
    \begin{equation}
        \varphi(\bos{a} \wedge \bos{b}) + \varphi(\bos{a} \vee \bos{b}) \geq \varphi(\bos{a}) + \varphi(\bos{b}).
    \end{equation}
    Moreover, $\varphi$ is \textit{strictly} supermodular if the inequality is replaced by a strict inequality for all incomparable $\bos{a}, \bos{b} \in \mathcal{L}$.
\end{defi}
\noindent Similarly, a function $\varphi$ is (strictly) submodular on a lattice if $-\varphi$ is (strictly) supermodular.
\begin{defi}[Subadditivity]
    A function $\varphi : \mathcal{L} \rightarrow \mathbb{R}$ is subadditive over a lattice $\left< \mathcal{L}, \prec, \wedge, \vee \right>$ if and only if, for all $\bos{a}, \bos{b} \in \mathcal{L},$
    \begin{equation}
        \varphi(\bos{a} \wedge \bos{b}) \leq \varphi(\bos{a}) + \varphi(\bos{b}).
    \end{equation}
    Moreover, $\varphi$ is \textit{strictly} subadditive if the inequality is replaced by a strict inequality for all incomparable $\bos{a}, \bos{b} \in \mathcal{L}$.
\end{defi}
\noindent Similarly, a function $\varphi$ is (strictly) superadditive if $-\varphi$ is (strictly) subadditive.

\begin{figure}[t!]
    \centering
    \begin{tikzpicture}[scale=0.9]
        \coordinate (A) at (-2,3);
        \coordinate (B) at (2, -3);
        
        \coordinate (C) at (-2,-3);
        \coordinate (D) at (2,3);
        \path [name path=A--B] (A) -- (B);
        \path [name path=C--D] (C) -- (D);
        \path [name intersections={of=A--B and C--D,by=E}];

        \coordinate (F) at (-0.666,3);
        \coordinate (G) at (3.333, -3);

        \coordinate (H) at (0.666,-3);
        \coordinate (I) at (4.666,3);
        \draw [name path=F--G] (F) -- (G);
        \draw [name path=H--I] (H) -- (I);
        \path [name intersections={of=F--G and H--I,by=J}];

        \path [name intersections={of=F--G and C--D,by=K}];
        \path [name intersections={of=H--I and A--B,by=L}];

        \coordinate (M) at (-2.5, -2);
        \coordinate (N) at (-2.5, 2);
        \draw[->, line width = 0.3mm] (M) -- (N);
        \node [inner sep=0pt, label=0:$F$] at (-2.5, 0) {};

        \fill[top color=transparent!0, bottom color=green!10] (A) -- (D) -- (E) -- cycle;
        \fill[top color=transparent!0, bottom color=green!10] (F) -- (I) -- (J) -- cycle;
        \fill[top color=blue!10, bottom color=transparent!0] (B) -- (C) -- (E) -- cycle;
        \fill[top color=blue!10, bottom color=transparent!0] (G) -- (H) -- (J) -- cycle;

        \draw [name path=A--B, color=blue] (A) -- (B);
        \draw [name path=C--D, color=blue] (C) -- (D);
        \draw [name path=F--G, color=orange] (F) -- (G);
        \draw [name path=H--I, color=orange] (H) -- (I);

        \node [fill=black,inner sep=1pt,label=0:$\bos{p}$] at (E) {};
        \node [fill=black,inner sep=1pt,label=0:$\bos{q}$] at (J) {};
        \node [fill=black,inner sep=1pt,label=0:$\bos{p} \wedge \bos{q}$] at (K) {};
        \node [fill=black,inner sep=1pt,label=180:$\bos{p} \vee \bos{q}$] at (L) {};
        
    \end{tikzpicture}
    \caption{Schematic representation of the majorization lattice arising from two probability distributions, $\bos{p}$ and $\bos{q}$, and their majorization cones. The green regions, which we call the \textit{past cones} are the sets of states that are majorized by one of the two distributions, whereas the blue regions, which we call the \textit{future cones} are the sets of states that majorize one of the two distributions. It is interesting to note that the tip of the intersection of two past cones gives the meet of the two distributions generating the cones, and similarly for future cones and the join. The arrow indicates the direction in which any Schur-concave function $F$ (such as entropy) increases. Alternative representations can be found in Ref.~\cite{deOliveiraJunior2022}.}
    \label{fig:lattice}
\end{figure}

Supermodularity in particular has been studied extensively in the economics literature, as the supermodularity of a cost function implies the existence of optimal solutions to optimization problems~\cite{Topkis1998}.
In quantum information, majorization plays a significant role, as the link between single-copy interconvertibility of entangled pure states and a majorization relation of their Schmidt vectors was discovered about twenty-five years ago by Nielsen~\cite{Nielsen1999}. Recently, the notions of meet and join in the majorization lattice have been given new interpretations in the context of Quantum Resource Theories (QRTs). Indeed, the meet was interpreted as an Optimal Common Resource~\cite{Bosyk2019}, while the join was described as an Optimal Common Product~\cite{Deside2024}. In view of that, exploring the lattice-theoretic properties of functions is natural in the context of QRTs.

\section{Majorization precursors to supermodularity and subadditivity on the lattice} \label{sec:precursors}

The Shannon entropy is known to be supermodular and subadditive on the majorization lattice, as proved by Cicalese and Vaccaro in Ref.~\cite{Cicalese2002}. The proof that they proposed is directly particularized to the case of the Shannon entropy, but one may ask whether it could also be due to an underlying majorization relation, what we call a \textit{majorization precursor}, which would be stronger than the entropic inequality (and would a priori imply supermodularity for a broader class of functions).
Such a majorization relation could relate tensor products of the form $(\bos{p} \wedge \bos{q}) \otimes (\bos{p} \vee \bos{q})$ and $\bos{p} \otimes \bos{q}$, and in turn imply inequalities relating sums of entropies of the form $H(\bos{p} \wedge \bos{q}) + H(\bos{p} \vee \bos{q})$ and $H(\bos{p}) + H(\bos{q})$.
Interestingly, we show that the operation of interest that allows us to tackle such a problem turns out to be a simple vector concatenation (or direct sum), rather than a tensor product of vectors. Such concatenations have previously appeared in the context of the study of, e.g., quantum uncertainty relations~\cite{Rudnicki2014, Rudnicki2015}.

We define the direct sum of two vectors $\bos{p}, \bos{q} \in \mathcal{P}_d$ as follows:
\begin{equation}
    \bos{p} \oplus \bos{q} \coloneqq (p_1, p_2, ..., p_d, q_1, q_2, ..., q_d)^\da.
\end{equation}
Note that $\bos{p} \oplus \bos{q} \in \mathbb{R}^{2d}$, and $S^\da_{2d}(\bos{p} \oplus \bos{q}) = 2$.

\begin{theo}\label{th:concatenation}
    For any $\bos{p}, \bos{q} \in \mathcal{P}_d$, we have
    \begin{equation}
         (\bos{p} \wedge \bos{q}) \oplus (\bos{p} \vee \bos{q}) \prec \bos{p} \oplus \bos{q}.
    \end{equation}
\end{theo}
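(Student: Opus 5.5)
The plan is to compare partial sums directly, using Lorenz curves and one structural fact about direct sums. For sorted vectors $\bos{x},\bos{y}$ (padded with zeros as needed) we have
\begin{equation*}
S^\da_k(\bos{x}\oplus\bos{y})=\max_{i+j=k}\bigl[S^\da_i(\bos{x})+S^\da_j(\bos{y})\bigr].
\end{equation*}
This holds because the $k$ largest entries of the concatenation consist of the $i$ largest entries of $\bos{x}$ and the $j$ largest entries of $\bos{y}$ for some split $i+j=k$. The totals agree, both being equal to $2$. It therefore suffices to show that for every $k$ and every split $i+j=k$,
\begin{equation*}
S^\da_i(\bos{m})+S^\da_j(\bos{j})\le G(k),\qquad G(k)\coloneqq S^\da_k(\bos{p}\oplus\bos{q}).
\end{equation*}
Here $\bos{m}=\bos{p}\wedge\bos{q}$ and $\bos{j}=\bos{p}\vee\bos{q}$. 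Since the result is stated for $\bos{p},\bos{q}\in\mathcal{P}_d$, I first replace $\bos{p},\bos{q}$ by $\bos{p}^\da,\bos{q}^\da$; this changes neither side. Let $L_p(t)$ and $L_q(t)$ be the piecewise-linear Lorenz curves interpolating $S^\da_t(\bos{p})$ and $S^\da_t(\bos{q})$. They are concave and nondecreasing, and constant beyond $d$. From Eq.~(\ref{eq:alpha}), $S^\da_i(\bos{m})=\min\{L_p(i),L_q(i)\}$. The Lorenz curve of the join is the least concave majorant of $\max\{L_p,L_q\}$, as in the construction of Ref.~\cite{Cicalese2002}.

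I will use the concave hull in the following form. For each integer $j$, either $S^\da_j(\bos{j})=\max\{L_p(j),L_q(j)\}$, or there are integer breakpoints $j_1<j<j_2$ and $\lambda\in(0,1)$ with $j=\lambda j_1+(1-\lambda)j_2$ and
\begin{equation*}
S^\da_j(\bos{j})=\lambda L_{r_1}(j_1)+(1-\lambda)L_{r_2}(j_2),\qquad r_1,r_2\in\{p,q\}.
\end{equation*}
I also need that $G$, extended piecewise-linearly, is concave in $k$. This holds because $\bos{p}\oplus\bos{q}$ is sorted non-increasingly, and $G$ is constant beyond index $2d$.

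The argument then splits into three cases.

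\emph{Case 1 (no hull correction).} Say $S^\da_j(\bos{j})=L_p(j)$. Then
\begin{equation*}
S^\da_i(\bos{m})+L_p(j)\le L_q(i)+L_p(j)\le G(k).
\end{equation*}

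\emph{Case 2 (hull correction with $r_1=r_2$).} Concavity of $L_{r_1}$ gives $\lambda L_{r_1}(j_1)+(1-\lambda)L_{r_1}(j_2)\le L_{r_1}(j)$. This reduces to Case 1.

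\emph{Case 3 (mixed hull correction).} This is the case I expect to be the main obstacle: $r_1=p$ and $r_2=q$, say. Since $S^\da_i(\bos{m})$ is at most both $L_q(i)$ and $L_p(i)$, I bound it by the matching convex combination $\lambda L_q(i)+(1-\lambda)L_p(i)$. Then
\begin{align*}
S^\da_i(\bos{m})+S^\da_j(\bos{j})
&\le\lambda\bigl[L_q(i)+L_p(j_1)\bigr]+(1-\lambda)\bigl[L_p(i)+L_q(j_2)\bigr]\\
&\le\lambda G(i+j_1)+(1-\lambda)G(i+j_2).
\end{align*}
Because $\lambda(i+j_1)+(1-\lambda)(i+j_2)=i+j=k$, concavity of $G$ bounds the right-hand side by $G(k)$.

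The only care needed is to check that breakpoints and indices beyond $d$ behave correctly under zero-padding. This is routine, since all curves are constant there.
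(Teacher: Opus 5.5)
Your argument is correct, but it takes a different route from the paper's.

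\textbf{The paper's route.} It never uses the concave-hull structure of the join. It works with the unsorted vector $\bos{\beta}=\bos{\beta}(\bos{p},\bos{q})$ and notes that at each index $m_i+\beta_i=p_i+q_i$, with both $m_i$ and $\beta_i$ lying between $p_i$ and $q_i$. The only nontrivial indices are those where the two Lorenz curves cross. So $(m_i,\beta_i)$ is a $T$-transform of $(p_i,q_i)$, which gives $\bos{m}\oplus\bos{\beta}\prec\bos{p}\oplus\bos{q}$. It then takes $\bos{j}\prec\bos{\beta}$ from Cicalese--Vaccaro as a black box. Finally it uses that $\bos{x}\prec\bos{y}$ implies $\bos{x}\oplus\bos{m}\prec\bos{y}\oplus\bos{m}$, proved with the same max-over-splits formula you start from.

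\textbf{Your route.} You compare $S^\da_k(\bos{m}\oplus\bos{j})$ with $G(k)$ in a single pass. Your Cases 1 and 2 amount to the pointwise inequality $\min\{L_p(i),L_q(i)\}+\max\{L_p(j),L_q(j)\}\le G(i+j)$. Case 3 shows that this inequality survives replacing the max by its least concave majorant. You do this by splitting the min into a convex combination with matching weights and then using concavity of $G$.

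\textbf{Comparison.} The paper's route needs strictly less information about the join. It also gives an explicit doubly stochastic map and the intermediate relation $\bos{m}\oplus\bos{\beta}\prec\bos{p}\oplus\bos{q}$. Yours avoids the unsorted intermediate vector and shows exactly where concavity of the Lorenz curve of $\bos{p}\oplus\bos{q}$ is needed, namely the mixed-vertex case. The price is that you rely on the finer hull description of the join.

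\textbf{One missing justification.} You assert without proof that the hull vertices lie at integers; this needs one line. On each $[n,n+1]$, the function $\max\{L_p,L_q\}$ is a maximum of two affine functions, hence convex and below its chord. So a non-integer crossing point can never be a vertex.
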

\begin{proof} \label{proof:alternative_supermodularity}
    The proof goes in two steps. We define the following vectors: 
    \begin{gather}
        \bos{m} := \bos{p} \wedge \bos{q}, \quad \bos{j} := \bos{p} \vee \bos{q}, \quad \bos{\beta} := \bos{\beta}(\bos{p}, \bos{q}),\\
        \bos{A} := \bos{p} \oplus \bos{q} \in \mathbb{R}^{2d}, \label{eq:A}\\
        \bos{B} := \bos{m} \oplus \bos{j} \in \mathbb{R}^{2d}, \label{eq:B}\\
        \bos{B'} := \bos{m} \oplus \bos{\beta} \in \mathbb{R}^{2d}.
    \end{gather}
    The first step of the proof is to show that $\bos{B'} \prec \bos{A}$. The second is to show that since $\bos j \prec \bos{\beta}$, then $\bos{B} \prec \bos{B'}$ as well. Note that $\sum_{i=1}^{2d} A_i = \sum_{i=1}^{2d} B_i = \sum_{i=1}^{2d} B'_i = 2$.
    We begin by proving that $\bos{B'} \prec \bos{A}$, by showing that all of the terms in $\bos{B'}$ can be obtained by successive $T$-transforms of $\bos{A}$~\cite{Majorization}. More specifically, a $T$-transform is a matrix of the form $\lambda I + (1 - \lambda) P_{ij}$, where $0 \leq \lambda \leq 1$, $I$ is the identity matrix, and $P_{ij}$ is the permutation matrix swapping coordinates $i$ and $j$. Note that for any vector $\bos p \in \mathbb{R}^d_+$, $T \bos p^t$ is always  majorized by $\bos p^t$ \cite{Majorization}, where $\cdot^t$ denotes the transpose. Let us assume that for a given cumulative sum index $i \leq d-1$, we have $S^\da_{i}(\bos{p}) \geq S^\da_{i}(\bos{q})$ (the other case being perfectly symmetric). We have 2 possible cases for the index $i + 1$.
    \begin{enumerate}
        \item \underline{$S^\da_{i+1}(\bos{p}) \geq S^\da_{i+1}(\bos{q})$:} \label{case:simple} in this case, the expressions for $m_{i+1}$ and $\beta_{i+1}$ are simple. We have
        \begin{align}
            m_{i+1} = S^\da_{i+1}(\bos{q}) - S^\da_{i}(\bos{q}) = q_{i+1} \\
            \beta_{i+1} = S^\da_{i+1}(\bos{p}) - S^\da_{i}(\bos{p}) = p_{i+1},
        \end{align}
        and so, for the index $i+1$, the components of $\bos{m}$ and $\bos{\beta}$ are simply components of $\bos{p}$ and $\bos{q}$. Therefore, for all of the indices falling under this case, the components of $\bos{B'}$ are simply the same as those of $\bos{A}$.
        \item \underline{$S^\da_{i+1}(\bos{p}) < S^\da_{i+1}(\bos{q})$:} \label{case:complicated} in this case, the difference of cumulative sums changes sign from index $i$ to index $i+1$. Let us define $\Delta = S^\da_{i}(\bos{p}) - S^\da_{i}(\bos{q})$, which is greater than or equal to 0 by hypothesis. We have
        \begin{align}
            m_{i+1} = S^\da_{i+1}(\bos{p}) - S^\da_{i}(\bos{q}) = p_{i+1} + \Delta \label{eq:meet_terms}\\
            \beta_{i+1} = S^\da_{i+1}(\bos{q}) - S^\da_{i}(\bos{p}) = q_{i+1} - \Delta \label{eq:join_terms}.
        \end{align}
        Moreover, we know that $p_{i+1} + \Delta < q_{i+1}$, or equivalently, $q_{i+1} - \Delta > p_{i+1}$. Indeed,
        \begin{align}
            0 &< S^\da_{i+1}(\bos{q}) - S^\da_{i+1}(\bos{p})\\
            &= q_{i+1} - p_{i+1} + S^\da_{i}(\bos{q}) - S^\da_{i}(\bos{p})\\
            \Leftrightarrow \quad q_{i+1} - &p_{i+1} > \underbrace{S^\da_{i}(\bos{p}) - S^\da_{i}(\bos{q})}_{= \Delta} \geq 0.
        \end{align}
        Thus, for the components associated with index $i+1$, we find the following $T$-transform
        \begin{equation}
            T = \left(1-\frac{\Delta}{q_{i+1} - p_{i+1}}\right)I + \frac{\Delta}{q_{i+1} - p_{i+1}} P_{12},
        \end{equation}
        such that $(m_{i+1}, \beta_{i+1})^t = T(p_{i+1}, q_{i+1})^t$.
    \end{enumerate}
    Therefore, since cases~\ref{case:simple} and~\ref{case:complicated} give expressions of $T$-transforms for going from $\bos{A}$ to $\bos{B'}$ (being trivially the identity matrix in case~\ref{case:simple}), we find $\bos{B'} \prec \bos{A}$.

    Now, in order to finalize the proof, we need to show that $\bos{B} \prec \bos{B'}$. Clearly, the $k^{\text{th}}$ cumulative sum of $\bos{B}$ is made up of the largest entries of $\bos{m}$ and $\bos{j}$, and the same goes for cumulative sums of $\bos{B'}$ being made of the largest entries of $\bos{m}$ and $\bos{\beta}$. To avoid ill-defined terms, we additionally introduce the convention that $S^\da_k (\bos{v}) = 1, \; \forall k > d$, for any $d$-dimensional probability vector $\bos{v}$. We already know from Ref.~\cite{Cicalese2002} that $\bos j \prec \bos{\beta}$, so it only remains to prove that $\bos{j} \prec \bos{\beta} \Rightarrow \bos{j} \oplus \bos{m} \prec \bos{\beta} \oplus \bos{m}$. We have
    \begin{gather}
        S^\da_k (\bos{B}) = \max_{0\leq l\leq k} \left(S^\da_l (\bos{j}) + S^\da_{k-l} (\bos{m})\right), \label{eq:max_B} \\
        S^\da_k (\bos{B'}) = \max_{0\leq l'\leq k} \left(S^\da_{l'} (\bos{\beta}) + S^\da_{k-l'} (\bos{m})\right). \label{eq:max_B'}
    \end{gather}
    Let us call $L$ the first value of $l$ that realizes the maximum of Eq.~(\ref{eq:max_B}), and $L'$ the first value of $l'$ that realizes the maximum of Eq.~(\ref{eq:max_B'}). Since $\bos{j} \prec \bos{\beta}$, we know that $S^\da_i (\bos{j}) \leq S^\da_i (\bos{\beta})$ is true for all $i \leq 2d$. To prove $\bos{B} \prec \bos{B'}$, we need to show that for all $k \leq 2d$, $S^\da_k (\bos{B}) \leq S^\da_k (\bos{B'})$ is true. First, note that since $\bos j \prec \bos \beta$, $L' < L$ is not possible, because $S^\da_l (\bos{j}) + S^\da_{k-l} (\bos{m}) \leq S^\da_{l} (\bos{\beta}) + S^\da_{k-l} (\bos{m})$ for all $l$. Therefore, for any $k \leq 2d$, there are 2 possible cases.
    
    \begin{enumerate}
        \item \underline{$L' = L$:} we have
        \begin{equation}
            \begin{aligned}
                S^\da_k (\bos{B}) &= S^\da_L (\bos{j}) + S^\da_{k-L} (\bos{m})\\
            &\leq S^\da_L (\bos{\beta}) + S^\da_{k-L} (\bos{m}) = S^\da_k (\bos{B'}).
            \end{aligned}
        \end{equation}
        \item \underline{$L' > L$:} we have
        \begin{equation}
            \begin{aligned}
                S^\da_k (\bos{B'}) &= \max_{0\leq l'\leq k} \left(S^\da_{l'} (\bos{\beta}) + S^\da_{k-l'} (\bos{m})\right)\\
                & = S^\da_{L'} (\bos{\beta}) + S^\da_{k-L'} (\bos{m})\\
                &\geq S^\da_L (\bos{\beta}) + S^\da_{k-L} (\bos{m})\\
                &\geq S^\da_L (\bos{j}) + S^\da_{k-L} (\bos{m}) = S^\da_k (\bos{B}).
            \end{aligned}
        \end{equation}
    \end{enumerate}
    Hence, in both cases, $S^\da_k (\bos{B}) \leq S^\da_k (\bos{B'})$ is true. Therefore, $\bos{B} \prec \bos{B'}$, and so $\bos{B} \prec \bos{A}$ as well by transitivity of the majorization relation.
\end{proof}

The majorization precursor expressed by Theorem~\ref{th:concatenation} directly implies the supermodularity property for sum-concave functions. Note that for the rest of this Section, all statements about sum-concave functions can be mirrored for sum-convex functions by flipping the direction of the inequality, but we do not state the analog results for the sake of brevity.

\begin{cor}\label{cor:alternative_supermodularity}
    All sum-concave functions $F$ are supermodular on the majorization lattice. Namely, for any $\bos{p}, \bos{q} \in \mathcal{P}_d$, we have
    \begin{equation} \label{eq:supermodularity}
        F(\bos{p} \wedge \bos{q}) + F(\bos{p} \vee \bos{q}) \geq F(\bos{p}) + F(\bos{q}).
    \end{equation}
\end{cor}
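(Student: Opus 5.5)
The plan is to read the corollary off Theorem~\ref{th:concatenation} by applying Lemma~\ref{lem:karamata} in dimension $2d$. Write $F(\bos{x}) = \sum_{i} \varphi(x_i)$ with $\varphi$ concave on $[0,1]$. The key observation is that $F$ is additive over concatenation: the direct sum only rearranges the entries of the two vectors, and $F$ is permutation invariant. So for any $\bos{x}, \bos{y} \in \mathcal{P}_d$,
\begin{equation}
    \sum_{i=1}^{2d} \varphi\big((\bos{x} \oplus \bos{y})_i\big) = \sum_{i=1}^{d} \varphi(x_i) + \sum_{i=1}^{d} \varphi(y_i) = F(\bos{x}) + F(\bos{y}).
\end{equation}

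Next, apply Theorem~\ref{th:concatenation} to obtain $(\bos{p} \wedge \bos{q}) \oplus (\bos{p} \vee \bos{q}) \prec \bos{p} \oplus \bos{q}$. Then invoke Lemma~\ref{lem:karamata} with the convex function $-\varphi$, reversing the inequality as the remark after that lemma allows. This gives
\begin{equation}
    \sum_{i=1}^{2d} \varphi\big(((\bos{p} \wedge \bos{q}) \oplus (\bos{p} \vee \bos{q}))_i\big) \geq \sum_{i=1}^{2d} \varphi\big((\bos{p} \oplus \bos{q})_i\big).
\end{equation}
Rewriting each side with the additivity identity yields exactly Eq.~(\ref{eq:supermodularity}).

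The only step needing care is that Lemma~\ref{lem:karamata} is stated for vectors in $\mathcal{P}_d$, whereas the concatenated vectors lie in $\mathbb{R}^{2d}$ with entries summing to $2$. I would handle this in one of two ways.
\begin{itemize}
    \item Note that the Hardy--Littlewood--P\'olya/Karamata inequality only requires equal total sums and entries in the domain of $\varphi$. All entries here lie in $[0,1]$, and both totals equal $2$, so the lemma applies as is.
    \item Alternatively, rescale both vectors by $1/2$ to get elements of $\mathcal{P}_{2d}$. Majorization is preserved under rescaling, and one applies the lemma to the concave function $x \mapsto \varphi(2x)$ on $[0,1/2]$, extended concavely to $[0,1]$.
\end{itemize}
Either route closes the argument, so no genuine obstacle remains beyond this bookkeeping.
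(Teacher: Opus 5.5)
Your proposal is correct and is exactly the paper's argument: Theorem~\ref{th:concatenation}, plus Lemma~\ref{lem:karamata} applied to $-\varphi$ in dimension $2d$, plus the fact that $F$ splits additively over the direct sum; your remark that Karamata's inequality needs only equal totals and entries in $[0,1]$ correctly justifies a step the paper takes without comment. (Drop your second route, which is unnecessary and slightly imprecise: $x \mapsto \varphi(2x)$ need not have a real-valued concave extension to $[0,1]$ when $\varphi$ has infinite slope at $1$, e.g.\ $\varphi(x)=\sqrt{1-x}$, so rely on the first.)
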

\noindent Note that if $\bos{p} \sim \bos{q}$, this is a trivial equality.
\begin{proof}[Proof of Corollary~\ref{cor:alternative_supermodularity}]
    For $\bos{x} \in \mathbb{R}^{2d}$, let
    \begin{equation}
        F(\bos{x}) = \sum_i \varphi(x_i),
    \end{equation}
    with $\varphi$ concave.
    From Theorem~\ref{th:concatenation}, we have that $\bos{A} \succ \bos{B}$ (where $\bos{A}$ and $\bos{B}$ are defined in Eqs.~\eqref{eq:A} and~\eqref{eq:B}), and using Lemma~\ref{lem:karamata} on $\varphi$ ($-\varphi$ being convex on the same interval), we get
    \begin{equation} \label{eq:supermodularity_karamata_step}
        \sum_{i=1}^{2d} \varphi(A_i) \leq \sum_{i=1}^{2d} \varphi(B_i).
    \end{equation}
    By the sum nature of $F$, the LHS of Eq.~(\ref{eq:supermodularity_karamata_step}) is precisely $F(\bos{p}) + F(\bos{q})$, and the RHS is precisely $F(\bos{p} \wedge \bos{q}) + F(\bos{p} \vee \bos{q})$, so that we have proven Eq.~(\ref{eq:supermodularity}).
\end{proof}

In some sense, Corollary~\ref{cor:alternative_supermodularity} is an interesting parallel to Lemma~\ref{lem:karamata}: it has been known for a long time that functions of the form show in Eq.~\eqref{eq:sum-cc} are the simplest form of Schur-concave functions, but we now show that on the majorization lattice, they can also be understood as the simplest form of supermodular functions. That this can be traced back to the meet/join concatenation being more spread out than the concatenation of two probability distributions (as is shown by the majorization relation of Theorem~\ref{th:concatenation}) gives useful insight on the interaction between these functions and the lattice-theoretic operations.

Corollary~\ref{cor:alternative_supermodularity} and its analogue for submodular functions essentially show that for the class of sum-convex/concave functions, submodularity/supermodularity is not so much a specificity of the functions themselves, but rather a consequence of the structure of the majorization lattice. As an example, the functions $||\bos{x}||_\alpha^\alpha = \sum_i x_i^\alpha$ are submodular for $\alpha > 1$ and supermodular for $\alpha < 1$, which we expect to have interesting applications in Quantum Resource Theories using the results from~\cite{Aubrun2008} on the $\ell_1$ closure of catalytic sets.
We postpone further application of these results to entropies to Section~\ref{sec:entropies}.

In the same spirit, one may ask whether a majorization precursor for subadditivity can be shown. The following two results answer precisely this question. Note that for the remainder of this work, we denote the distribution $(1, 0, \dots, 0)$ by the symbol $\bos{e}$ (which thus depends on the dimension of the vectors involved).
\begin{lem}\label{lem:subadditive_tensor}
    For any $\bos{p}, \bos{q} \in \mathcal{P}_d$, we have
    \begin{equation}
        \bos{p} \otimes \bos{q} \prec \bos{p} \wedge \bos{q}.
    \end{equation}
\end{lem}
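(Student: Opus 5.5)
The plan is to use the defining property of the meet rather than any explicit computation. By the definition of the majorization lattice, $\bos{p} \wedge \bos{q}$ majorizes every $\bos{r}$ that is majorized by both $\bos{p}$ and $\bos{q}$. So it suffices to show $\bos{p} \otimes \bos{q} \prec \bos{p}$ and $\bos{p} \otimes \bos{q} \prec \bos{q}$, where we compare vectors of different lengths by padding with zeros as the paper does: $\bos{p}$ is viewed in $\mathbb{R}^{d^2}$. One technical point is that the lattice is defined on $\pr$ with dimension $d$, while $\bos{p}\otimes\bos{q}$ lives in dimension $d^2$. I would therefore first embed everything in $\mathcal{P}^\da_{d^2}$ by zero-padding. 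The meet computed via Eq.~\eqref{eq:alpha} is unchanged by this padding, since after index $d$ all cumulative sums equal $1$. Hence the meet in the larger lattice coincides with the padded $\bos{p}\wedge\bos{q}$, and the universal property may be invoked there.

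The core step is $\bos{p} \otimes \bos{q} \prec \bos{p}$, and by symmetry the same argument gives $\bos{p}\otimes\bos{q} \prec \bos{q}$. I would exhibit a doubly stochastic map sending the padded $\bos{p}$ to $\bos{p}\otimes\bos{q}$. Index the $d^2$ coordinates by pairs $(i,k)$. The padded $\bos{p}$ puts mass $p_i$ at $(i,1)$ and zero elsewhere, and we want $p_i q_k$ at $(i,k)$. Within each block $\{(i,1),\dots,(i,d)\}$, apply the $d\times d$ matrix $M$ whose every column equals $\bos{q}$, cyclically shifted so that it is doubly stochastic. Concretely, take $M_{kl} = q_{(k-l \bmod d)+1}$. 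This circulant matrix has both row and column sums equal to $1$, and it maps $p_i\bos{e}$ to $p_i\bos{q}$. The block-diagonal matrix $I_d\otimes M$ is then doubly stochastic and maps the padded $\bos{p}$ to $\bos{p}\otimes\bos{q}$, up to a permutation of coordinates. By the Hardy--Littlewood--P\'olya characterization~\cite{Majorization}, this gives the majorization. An alternative phrasing is $\bos{p}\otimes\bos{q}\prec\bos{p}\otimes\bos{e}$, which follows because $\bos{q}\prec\bos{e}$ and tensoring with $\bos{p}$ preserves majorization via $I\otimes D$. Equivalently, via cumulative sums, any $k$ entries of $\bos{p}\otimes\bos{q}$ come from at most $k$ distinct values of $i$, so their sum is at most $S^\da_k(\bos{p})$.

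I would actually present the cumulative-sum version, since it is the most elementary. Fix any set $K$ of $k$ index pairs $(i,l)$, and let $I_K$ be the set of first coordinates appearing in $K$, so $|I_K|\le k$. Then
\[
\sum_{(i,l)\in K} p_i q_l \le \sum_{i\in I_K} p_i \sum_{l} q_l = \sum_{i\in I_K} p_i \le S^\da_k(\bos{p}).
\]
Taking $K$ to be the indices of the $k$ largest entries gives $S^\da_k(\bos{p}\otimes\bos{q})\le S^\da_k(\bos{p})$, and the total sums are both $1$. The same argument applied to the second coordinates gives the bound against $\bos{q}$. Finally, sorting $\bos{p}\otimes\bos{q}$ places it in $\mathcal{P}^\da_{d^2}$, where the meet's universal property concludes the proof.

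The main obstacle is not the inequality itself, which is short, but the dimension bookkeeping: making sure the meet's defining property legitimately applies to a $d^2$-dimensional vector. I would handle this with the padding remark above, checking from Eq.~\eqref{eq:alpha} that the padded meet is still the meet.
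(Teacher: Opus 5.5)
Your proof is correct and takes the same route as the paper: show $\bos{p}\otimes\bos{q}\prec\bos{p}$ and $\bos{p}\otimes\bos{q}\prec\bos{q}$, then invoke the defining property of the meet. The paper just calls these two majorizations immediate. You additionally supply the cumulative-sum bound and check that zero-padding to dimension $d^2$ leaves the meet unchanged, and both of these are sound.
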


\begin{proof}
    Let $\bos{p}, \bos{q} \in \pr$. It is immediate to see that $\bos{p} \otimes \bos{q} \prec \bos{p}$ and $\bos{p} \otimes \bos{q} \prec \bos{q}$ are both always true, which implies that $\bos{p} \otimes \bos{q} \prec \bos{p} \wedge \bos{q}$ is also true by definition of the meet.
\end{proof}

\begin{theo}\label{th:subadditive_concatenation}
    For any $\bos{p}, \bos{q} \in \mathcal{P}_d$, we have
    \begin{equation}
        \bos{p} \oplus \bos{q} \prec (\bos{p} \wedge \bos{q}) \oplus \bos e,
    \end{equation}
    where $\bos{e} = (1, 0, \dots, 0)$.
\end{theo}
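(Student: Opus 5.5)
We need to show $\bos{p}\oplus\bos{q} \prec \bos{m}\oplus\bos{e}$, where $\bos m = \bos p\wedge\bos q$. Both sides are vectors in $\mathbb{R}^{2d}$ with total sum $2$, so it suffices to compare partial sums. The right-hand side sorted is $(1, m_1, m_2, \dots, m_d, 0,\dots,0)$. This holds because $\bos m\in\pr$ has all entries at most $1$, so the entry $1$ from $\bos e$ is the largest. Hence for $k\ge 1$,
\begin{equation}
    S^\da_k(\bos{m}\oplus\bos{e}) = 1 + S^\da_{k-1}(\bos{m}),
\end{equation}
with the conventions $S^\da_0=0$ and $S^\da_l(\bos m)=1$ for $l\ge d$.

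For the left-hand side, as in the proof of Theorem~\ref{th:concatenation}, the partial sums of a direct sum are
\begin{equation}
    S^\da_k(\bos p\oplus\bos q)=\max_{0\le l\le k}\left(S^\da_l(\bos p)+S^\da_{k-l}(\bos q)\right),
\end{equation}
where we may assume $\bos p,\bos q\in\pr$ after sorting. So the whole task reduces to proving, for every $0\le l\le k\le 2d$,
\begin{equation}
    S^\da_l(\bos p)+S^\da_{k-l}(\bos q)\le 1+S^\da_{k-1}(\bos m).
\end{equation}

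We fix $l$ and $k$ and set $l'=k-l$. By symmetry in $\bos p$ and $\bos q$ (the meet is symmetric), we may assume $l\le l'$. There are two cases.

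\textbf{Case $l=0$.} The left side is $S^\da_{k}(\bos q)\le 1$, and the right side is at least $1$, so the inequality holds.

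\textbf{Case $l\ge 1$.} We have $l\le l'$ and $l+l'=k$, so $l'\le k-1$. We bound the larger-index term by $S^\da_{l'}(\bos q)\le 1$ and keep the smaller one. It then suffices to show
\begin{equation}
    S^\da_l(\bos p)\le S^\da_{k-1}(\bos m).
\end{equation}
This is not immediate, since $S^\da_l(\bos m)=\min\{S^\da_l(\bos p),S^\da_l(\bos q)\}$ could be smaller than $S^\da_l(\bos p)$. However, we have more room than that. Keeping both terms, the inequality we need is
\begin{equation}
    S^\da_l(\bos p)+S^\da_{l'}(\bos q)\le 1+S^\da_{k-1}(\bos m).
\end{equation}
Since $l\le l'\le k-1$ and $l\ge1$, monotonicity of partial sums gives $S^\da_{k-1}(\bos m)\ge S^\da_{l'}(\bos m)=\min\{S^\da_{l'}(\bos p),S^\da_{l'}(\bos q)\}$. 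We split on which minimum is attained.

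\emph{Subcase $S^\da_{l'}(\bos m)=S^\da_{l'}(\bos q)$.} Then the right side is at least $1+S^\da_{l'}(\bos q)$. This dominates the left side because $S^\da_l(\bos p)\le 1$.

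\emph{Subcase $S^\da_{l'}(\bos m)=S^\da_{l'}(\bos p)$.} Then the right side is at least $1+S^\da_{l'}(\bos p)\ge 1+S^\da_l(\bos p)$, using $l\le l'$. This in turn dominates the left side because $S^\da_{l'}(\bos q)\le 1$.

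In both subcases the inequality holds, and equality of total sums gives the majorization.

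\textbf{Where the difficulty lies.} The only delicate point is this choice of bookkeeping. The partial sum of the right-hand side uses index $k-1$ rather than $k$, because one slot is taken by the $1$ from $\bos e$. So the argument must use the larger index $l'$, which satisfies $l'\le k-1$ when $l\ge1$, and must handle $l=0$ separately. Once that is set up, everything follows from $S^\da\le1$, monotonicity of partial sums, and the characterization $S^\da_i(\bos m)=\min\{S^\da_i(\bos p),S^\da_i(\bos q)\}$ from Eq.~\eqref{eq:alpha}. Note that this characterization relies on $\bos\alpha(\bos p,\bos q)$ already being sorted, as established in Ref.~\cite{Cicalese2002}.
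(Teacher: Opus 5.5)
Your proof is correct and follows essentially the same route as the paper. Both reduce to the partial-sum formulas $S^\da_k(\bos p\oplus\bos q)=\max_{l}\bigl(S^\da_l(\bos p)+S^\da_{k-l}(\bos q)\bigr)$ and $S^\da_k(\bos m\oplus\bos e)=1+\min\{S^\da_{k-1}(\bos p),S^\da_{k-1}(\bos q)\}$, dispose of $l\in\{0,k\}$ using $S^\da\le 1$, and handle $1\le l\le k-1$ by monotonicity. The only difference is cosmetic: the paper avoids your WLOG $l\le l'$ and your split on which minimum is attained at index $l'$. It instead bounds $1+\min\{S^\da_{k-1}(\bos p),S^\da_{k-1}(\bos q)\}\ge S^\da_{k-1}(\bos p)+S^\da_{k-1}(\bos q)$ in one step and then uses $l,\,k-l\le k-1$.
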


\begin{proof}
            Let $\bos{p}, \bos{q} \in \mathcal{P}_d$. We define 
    \begin{gather}
        \bos{m} := \bos{p} \wedge \bos{q},\\
        \bos{A} := \bos{p} \oplus \bos{q},\label{eq:A2}\\
        \bos{C} := \bos{m} \oplus \bos{e}.\label{eq:C}
    \end{gather}
    Note that $\sum_{i=1}^{2d} A_i = \sum_{i=1}^{2d} C_i = 2$. We now show that the majorization precursor $\bos{A} \prec \bos{C}$ holds. Clearly, the $k^{\text{th}}$ cumulative sum of $\bos{A}$ is made up of the largest entries of $\bos{p}$ and $\bos{q}$, and the same goes for the cumulative sums of $\bos{C}$ being made of the largest entries of $\bos{m}$ and $\bos{e}$. To avoid ill-defined terms, we additionally introduce the convention that, for any $\bos{v} \in \pr$, $S^\da_k (\bos{v}) = 1, \; \forall k > d$, and $S^\da_0(\bos{v}) = 0$. For all $k \leq 2d$, we have
    \begin{gather}
        S^\da_k (\bos{A}) = \max_{0\leq l\leq k} \left(S^\da_l (\bos{p}) + S^\da_{k-l} (\bos{q})\right), \label{eq:theo 2 Ska}\\
        S^\da_k (\bos{C}) = 1 + S^\da_{k-1}(\bos{m}).
    \end{gather} 
    Recall Eq.~(\ref{eq:alpha_bis}), from which we immediately deduce
    \begin{equation}
        S^\da_i (\bos{m}) = \min \left\{S^\da_i (\bos{p}), S^\da_i (\bos{q})\right\}.
    \end{equation}
    We need to show that $S^\da_k (\bos{A}) \prec S^\da_k (\bos{C})$ is true for all $k \leq 2d$. First, we directly see that, when $k = 1$, we have $S^\da_1(\bos{C}) = 1 \geq S^\da_1(\bos{A}) = \max \{p_1, q_1\}$. 
    Now, for $k \geq 2$, there are several cases depending on the value of $L_k$, i.e. the integer value at which the maximum in Eq.~\eqref{eq:theo 2 Ska} is realized, which ranges from $0$ to $k$. The vectors $\bos{p}$ and $\bos{q}$ having no particular structure, we must consider several cases separately. 
    \begin{enumerate}
        \item \underline{$L_k = 0$}: we have \label{case:0}
        \begin{equation}
            S^\da_k(\bos{C}) \geq 1 \geq S^\da_k(\bos{q})=S^\da_k(\bos{A}).
        \end{equation}
        \item \underline{$L_k = k$}: symmetric to the $L_k=0$ case, because \label{case:k}
        \begin{equation}
            S^\da_k(\bos{C}) \geq 1 \geq S^\da_k(\bos{p})=S^\da_k(\bos{A}).
        \end{equation}
    \end{enumerate}
    For the remaining cases, we first note the following inequality
    \begin{align}
        S^\da_k(\bos{C}) &= 1 + \min \left\{S^\da_{k-1}(\bos{p}), S^\da_{k-1}(\bos{q})\right\}\\
        &\geq \max \left\{S^\da_{k-1}(\bos{p}), S^\da_{k-1}(\bos{q})\right\}\nonumber\\
        &\quad + \min \left\{S^\da_{k-1}(\bos{p}), S^\da_{k-1}(\bos{q})\right\} \label{eq:subadditivity_max_trick}\\
        &= S^\da_{k-1}(\bos{p}) + S^\da_{k-1}(\bos{q}),
    \end{align}
    where Eq.~(\ref{eq:subadditivity_max_trick})~is obtained using that the cumulative sum of any $d$-dimensional probability vector is always less than or equal to 1. Therefore, for the relation $S^\da_k(\bos{C}) \geq S^\da_k(\bos{A})$ to be verified, it is enough to show that, for any choice of $L_k$ that we have not treated yet,
    \begin{gather}
        S^\da_{k-1}(\bos{p}) + S^\da_{k-1}(\bos{q}) \geq \left(S^\da_{L_k} (\bos{p}) + S^\da_{k-{L_k}} (\bos{q})\right) = S^\da_k(\bos{A}),\label{eq:subadditivity_leq}
    \end{gather}
    for all $k \geq 2$.
    \begin{enumerate}[resume]
        \item \underline{$L_k = 1$}: Eq.~(\ref{eq:subadditivity_leq}) becomes \label{case:1}
        \begin{equation}
            S^\da_1(\bos{p}) - S^\da_{k-1}(\bos{p}) \leq 0,
        \end{equation}
        which is verified for all $k \geq 2$.
        \item \underline{$L_k = k - 1$}: symmetric to the $L_k = 1$ case, because Eq.~(\ref{eq:subadditivity_leq}) becomes \label{case:k-1}
        \begin{equation}
            S^\da_1(\bos{q}) - S^\da_{k-1}(\bos{q}) \leq 0,
        \end{equation}
        which is verified for all $k \geq 2$.
        \item \underline{$2 \leq L_k \leq k-2$}: this case can only arise for $k \geq 4$, otherwise the inequality allows no $L_k$ value\footnote{This is not an issue because if $k=2$, $L_2 \in \{0, 1, 2\}$, and if $k = 3$, $L_3 \in \{0, 1, 2, 3\}$. All of these possibilities fall under cases~\ref{case:0},~\ref{case:k},~\ref{case:1} or~\ref{case:k-1}, thus all possible cases are taken into account.}. We have both \label{case:rest}
        \begin{equation}
            S^\da_l (\bos{p}) \leq S^\da_{k-1}(\bos{p}) \quad \text{and} \quad S^\da_{k-l} (\bos{q}) \leq S^\da_{k-1}(\bos{q}),
        \end{equation}
        which are both true for all $k \geq 4$, and so Eq.~(\ref{eq:subadditivity_leq}) is verified.
    \end{enumerate}
    Therefore, no matter the value of $L_k$, the majorization inequalities are satisfied for all $k \leq 2d$. We have thus shown that $\bos{A} \prec \bos{C}$.
\end{proof}

The majorization precursor expressed by Theorem~\ref{th:subadditive_concatenation} directly implies the following property for sum-concave functions.
\begin{cor}\label{cor:subadditivity}
    For any sum-concave function $F$ and any $\bos{p}, \bos{q} \in \mathcal{P}_d$, we have
    \begin{equation} \label{eq:subadditivity}
        F(\bos{p} \wedge \bos{q}) \leq F(\bos{p}) + F(\bos{q}) - F(\bos e),
    \end{equation}
    where $\bos e = (1, 0, \dots, 0)$.
\end{cor}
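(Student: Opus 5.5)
The plan is to mirror the proof of Corollary~\ref{cor:alternative_supermodularity}, with Theorem~\ref{th:subadditive_concatenation} now playing the role of Theorem~\ref{th:concatenation}. Write $F(\bos{x}) = \sum_i \varphi(x_i)$ with $\varphi$ concave on $[0,1]$, and work with vectors of length $2d$. Set
\begin{equation*}
\bos{A} = \bos{p} \oplus \bos{q}, \qquad \bos{C} = (\bos{p} \wedge \bos{q}) \oplus \bos{e},
\end{equation*}
as in Eqs.~\eqref{eq:A2} and~\eqref{eq:C}. Theorem~\ref{th:subadditive_concatenation} gives $\bos{A} \prec \bos{C}$.

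Next I apply Lemma~\ref{lem:karamata} to the convex function $-\varphi$. The lemma is stated for vectors in $\mathcal{P}_d$, and $\bos{A}, \bos{C}$ sum to $2$ rather than $1$. However, the Hardy--Littlewood--P\'olya/Karamata inequality only needs the majorization relation (equal total sums and ordered partial sums) together with entries lying in the domain of $\varphi$. Here every entry lies in $[0,1]$, so the lemma applies verbatim. Alternatively, one can rescale by $1/2$ and use the convex function $x \mapsto -\varphi(2x)$ on $[0,1/2]$. Either way this yields
\begin{equation*}
\sum_{i=1}^{2d} \varphi(C_i) \leq \sum_{i=1}^{2d} \varphi(A_i).
\end{equation*}

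The last step is to unpack both sides using the additive structure of $F$. Since the direct sum is just a concatenation followed by sorting, and sorting does not change a sum of $\varphi$-values, the right-hand side equals $F(\bos{p}) + F(\bos{q})$. The left-hand side equals $F(\bos{p} \wedge \bos{q}) + F(\bos{e})$. For the latter, $\bos{e}$ is padded with zeros to dimension $d$, which is consistent with the convention that $F(\bos{e}) = \varphi(1) + (d-1)\varphi(0)$ is evaluated in the same dimension as $\bos{p}$ and $\bos{q}$. Rearranging gives Eq.~\eqref{eq:subadditivity}.

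No step is a real obstacle; the only point needing care is the one above, namely checking that Lemma~\ref{lem:karamata} extends to vectors of total mass $2$. I would note this explicitly and justify it by rescaling. As a remark, when $\varphi(0) = \varphi(1) = 0$, as for the Tsallis and Shannon entropies, we get $F(\bos{e}) = 0$, so the result is exactly lattice subadditivity.
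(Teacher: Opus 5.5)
Your proposal is correct and is essentially the paper's proof. The paper likewise takes $\bos{A} \prec \bos{C}$ from Theorem~\ref{th:subadditive_concatenation}, applies Lemma~\ref{lem:karamata} to $-\varphi$, and reads off $F(\bos{p})+F(\bos{q}) \geq F(\bos{p}\wedge\bos{q}) + F(\bos{e})$ from the additive structure of $F$; your remarks that the inequality only needs equal totals and entries in $[0,1]$ (so total mass $2$ is harmless) and that $\bos{e}$ is padded to dimension $d$ make explicit two points the paper leaves implicit.
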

\noindent Interestingly, for functions such that $F(\bos e) > 0$, this is a stronger property than the usual subadditivity on the majorization lattice.
\begin{proof}[Proof of Corollary~\ref{cor:subadditivity}]
    For $\bos{x} \in \mathbb{R}^{2d}$, let
    \begin{equation}
        F(\bos{x}) = \sum_i \varphi(x_i),
    \end{equation}
    with $\varphi$ concave.
    From Theorem~\ref{th:subadditive_concatenation} we have that $\bos{A} \prec \bos{C}$ (where $\bos{A}$ and $\bos{C}$ are defined in Eqs.~\eqref{eq:A2} and~\eqref{eq:C}), and using Lemma~\ref{lem:karamata} on $\varphi$ ($-\varphi$ being convex on the same interval), we get
    \begin{equation} \label{eq:subadditivity_karamata_step}
        \sum_{i=1}^{2d} \varphi(A_i) \geq \sum_{i=1}^{2d} \varphi(C_i).
    \end{equation}
    By the sum nature of $F$, the LHS of Eq.~(\ref{eq:subadditivity_karamata_step}) is precisely $F(\bos{p}) + F(\bos{q})$, and the RHS is precisely $F(\bos{p} \wedge \bos{q}) + F(\bos{e})$, so we have proven Eq.~(\ref{eq:subadditivity}). 
\end{proof}

\section{Refined supermodularity and subadditivity of Shannon, R\'enyi, and Tsallis entropies} \label{sec:entropies}

This section proves the supermodularity and subadditivity of Tsallis entropies using Theorems~\ref{th:concatenation},~\ref{th:subadditive_concatenation}, and Lemma~\ref{lem:subadditive_tensor}, recovering these properties for the Shannon entropy.
We then combine the majorization precursors with the improved Schur-concavity of Shannon and Tsallis entropies proven by Ho and Verd\'u in Ref.~\cite{HoVerdu2010, HoVerdu2015} to further tighten the bounds on supermodularity and subadditivity, essentially showing \textit{strict} supermodularity and subadditivity. Moreover, we prove the supermodularity (for $\alpha > 1$) and subadditivity (for $\alpha \geq 0$) of R\'enyi entropies on the majorization lattice using an additional lemma on log-submodularity. Let us begin with the Shannon entropy. To prove its supermodularity, one could simply notice that it is sum-concave (with $\varphi(x) = -x \log x$) and use Corollary~\ref{cor:alternative_supermodularity}, however one can achieve a better bound with a slightly different approach. We will use the following lemma.
\begin{lem}[Ho and Verd\'u~\cite{HoVerdu2010}] \label{lem:ho_verdu_shannon}
    Let $\bos{p}, \bos{q} \in \pr$ be such that $\bos{p} \prec \bos{q}$. Then,
    \begin{equation} \label{eq:ho_verdu_shannon}
        H(\bos{p}) \geq H(\bos{q}) + D(\bos{q} \parallel \bos{p}).
    \end{equation}
\end{lem}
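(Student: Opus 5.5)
The strategy is to reduce the inequality to a linear statement in $\bos q - \bos p$ and then apply Abel summation, using that $\log p_i$ is non-increasing in $i$. The key identity is that the right-hand side of Eq.~\eqref{eq:ho_verdu_shannon} is a cross-entropy:
\begin{equation*}
H(\bos q) + D(\bos q \parallel \bos p) = -\sum_i q_i \log q_i + \sum_i q_i \log q_i - \sum_i q_i \log p_i = -\sum_i q_i \log p_i .
\end{equation*}
So Eq.~\eqref{eq:ho_verdu_shannon} is equivalent to
\begin{equation*}
\sum_i (q_i - p_i)\log p_i \geq 0.
\end{equation*}

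\textbf{Supports.} Let $r$ be the number of nonzero entries of $\bos p$. Since $\bos p \in \pr$, we have $S^\da_r(\bos p) = 1$. Then $\bos p \prec \bos q$ gives $S^\da_r(\bos q) \geq 1$, hence $S^\da_r(\bos q) = 1$. Because $\bos q$ is sorted, $q_i = 0$ for all $i > r$. Thus $\operatorname{supp}(\bos q) \subseteq \operatorname{supp}(\bos p)$, so $D(\bos q \parallel \bos p)$ is finite. All sums above may therefore be restricted to $i \leq r$, where every $\log p_i$ is finite.

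\textbf{Abel summation.} Set $c_i = \log p_i$ for $i \leq r$; this sequence is non-increasing because $\bos p \in \pr$. Set $\Delta_k = S^\da_k(\bos q) - S^\da_k(\bos p)$. For sorted vectors, majorization gives $\Delta_k \geq 0$ for all $k$, and $\Delta_r = 0$ by the support step. Summing by parts,
\begin{equation*}
\sum_{i=1}^{r} (q_i - p_i)\, c_i = \sum_{k=1}^{r-1} \Delta_k\,(c_k - c_{k+1}) + \Delta_r\, c_r .
\end{equation*}
Each term $\Delta_k (c_k - c_{k+1})$ is a product of two nonnegative factors, and the last term vanishes. This proves the inequality.

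\textbf{Main obstacle.} The only delicate point is the zero-probability entries, where $\log p_i = -\infty$. It is handled by the support argument above, which both guarantees that $D(\bos q \parallel \bos p)$ is finite and makes the boundary term $\Delta_r c_r$ vanish. The rest is a routine telescoping computation.
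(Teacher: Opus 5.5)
Your proof is correct and complete. The paper gives no proof of this lemma and simply imports it from Ho and Verd\'u, so there is no in-paper argument to compare with; your write-up makes the statement self-contained.

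Your route has two steps. First, you rewrite $H(\bos q)+D(\bos q\parallel\bos p)$ as the cross-entropy $-\sum_i q_i\log p_i$, which reduces the claim to the linear inequality $\sum_i (q_i-p_i)\log p_i\ge 0$. Second, you sum by parts against the non-increasing sequence $\log p_i$. This is the standard argument for this strengthened Schur-concavity.

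Your support step does exactly the necessary work: it makes $D(\bos q\parallel\bos p)$ finite and kills the boundary term. You also correctly rely on both vectors being sorted. This genuinely matters, because $D(\bos q\parallel\bos p)$, unlike $H$, depends on how the entries are paired, and the inequality can fail for misaligned vectors.

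A side benefit of your version is the exact identity
\[
H(\bos p)-H(\bos q)-D(\bos q\parallel\bos p)=\sum_{k=1}^{r-1}\bigl(S^\da_k(\bos q)-S^\da_k(\bos p)\bigr)\log\frac{p_k}{p_{k+1}},
\]
which quantifies the slack and characterizes the equality cases. The same decomposition also works for any differentiable concave $\varphi$. The difference $\sum_i\varphi(p_i)-\sum_i\varphi(q_i)$ splits into the nonnegative gap $\sum_i\bigl[\varphi(p_i)+\varphi'(p_i)(q_i-p_i)-\varphi(q_i)\bigr]$ plus $\sum_i\bigl(-\varphi'(p_i)\bigr)(q_i-p_i)$. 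The second term is nonnegative by your summation by parts, since $-\varphi'(p_i)$ is non-increasing in $i$. This gives analogous refinements for other sum-concave functions, such as the Tsallis entropies.
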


Using the above lemma, we can show the following strengthening of the supermodularity of the Shannon entropy on the majorization lattice.
\begin{cor}\label{cor:refined_shannon_supermodularity}
    For any $\bos{p}, \bos{q} \in \mathcal{P}_d$, we have
    \begin{equation} \label{eq:refined_shannon_supermodularity}
        H(\bos{p} \wedge \bos{q}) + H(\bos{p} \vee \bos{q}) \geq H(\bos{p}) + H(\bos{q}) + \eta,
    \end{equation}
    where $\eta \coloneqq 2D\left(\tfrac{\bos{p} \oplus \bos{q}}{2} \parallel \tfrac{\bos{p} \wedge \bos{q} \oplus \bos{p} \vee \bos{q}}{2}\right)$.
\end{cor}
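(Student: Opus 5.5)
The plan is to apply Lemma~\ref{lem:ho_verdu_shannon} to the normalized concatenations rather than to the original distributions. By Theorem~\ref{th:concatenation} we have $\bos{B} \prec \bos{A}$, where $\bos{A} = \bos{p}\oplus\bos{q}$ and $\bos{B} = (\bos{p}\wedge\bos{q})\oplus(\bos{p}\vee\bos{q})$. Both vectors live in $\mathbb{R}^{2d}$ and have total mass $2$. Majorization is invariant under a common positive rescaling, so $\tfrac{\bos{B}}{2} \prec \tfrac{\bos{A}}{2}$. Both rescaled vectors are probability vectors in $\mathcal{P}_{2d}$, and after sorting they lie in $\mathcal{P}^\da_{2d}$. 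Lemma~\ref{lem:ho_verdu_shannon}, applied in dimension $2d$, then gives
\begin{equation*}
H\!\left(\tfrac{\bos{B}}{2}\right) \geq H\!\left(\tfrac{\bos{A}}{2}\right) + D\!\left(\tfrac{\bos{A}}{2}\,\Big\|\,\tfrac{\bos{B}}{2}\right).
\end{equation*}
Here the vectors are taken in sorted order, which is how $\oplus$ is defined, so the divergence is exactly the one appearing in $\eta/2$.

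The second step is to unfold the entropies of the halved concatenations. For any $\bos{x},\bos{y}\in\mathcal{P}_d$,
\begin{equation*}
H\!\left(\tfrac{\bos{x}\oplus\bos{y}}{2}\right) = -\sum_i \tfrac{x_i}{2}\log\tfrac{x_i}{2} - \sum_i \tfrac{y_i}{2}\log\tfrac{y_i}{2} = \tfrac12\bigl(H(\bos{x})+H(\bos{y})\bigr) + 1,
\end{equation*}
where the $+1$ comes from $\log 2$ in base~$2$. We apply this once with $(\bos{x},\bos{y}) = (\bos{p}\wedge\bos{q},\bos{p}\vee\bos{q})$ and once with $(\bos{x},\bos{y}) = (\bos{p},\bos{q})$. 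Substituting into the inequality above, the constants $1$ cancel. Multiplying by $2$ then yields $H(\bos{p}\wedge\bos{q}) + H(\bos{p}\vee\bos{q}) \geq H(\bos{p})+H(\bos{q}) + 2D(\cdot\|\cdot)$, which is exactly \eqref{eq:refined_shannon_supermodularity} with the stated $\eta$.

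The only step needing care is checking that the hypotheses of Lemma~\ref{lem:ho_verdu_shannon} hold in this setting. First, the lemma must be applied in dimension $2d$ to sorted vectors, and the sorting is built into the definition of $\oplus$. Second, the divergence must be finite, i.e.\ $\operatorname{supp}(\bos{A})\subseteq\operatorname{supp}(\bos{B})$ on sorted vectors. This follows from $\bos{B}\prec\bos{A}$: the more majorized vector has at least as many nonzero entries, and sorted supports are initial segments. Finally, if $\bos{p}$ and $\bos{q}$ are in $\pr$ rather than merely in $\mathcal{P}_d$, the statement reduces to sorted inputs anyway, since the meet and join are defined on $\pr$.
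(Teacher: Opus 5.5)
Your proposal is correct and follows the same route as the paper. You rescale the relation of Theorem~\ref{th:concatenation} by $\tfrac12$, apply Lemma~\ref{lem:ho_verdu_shannon} in dimension $2d$, and unfold $H\bigl(\tfrac{\bos{x}\oplus\bos{y}}{2}\bigr)=\tfrac12\bigl(H(\bos{x})+H(\bos{y})\bigr)+1$ so that the constants cancel. The paper leaves that unfolding and the support condition $\operatorname{supp}(\bos{A})\subseteq\operatorname{supp}(\bos{B})$ implicit, and your justifications of both are sound.
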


\begin{proof}
    From Theorem~\ref{th:concatenation} we have that $\tfrac{\bos{p} \oplus \bos{q}}{2} \succ \tfrac{\bos{m} \oplus \bos{j}}{2}$, and using Lemma~\ref{lem:ho_verdu_shannon}, we directly obtain
    \begin{align}
        &H\left(\tfrac{\bos{m} \oplus \bos{j}}{2}\right) \geq H\left(\tfrac{\bos{p} \oplus \bos{q}}{2}\right) + D\left(\tfrac{\bos{p} \oplus \bos{q}}{2} \parallel \tfrac{\bos{m} \oplus \bos{j}}{2}\right),\\
        &\Rightarrow H(\bos{p} \wedge \bos{q}) + H(\bos{p} \vee \bos{q})\nonumber\\
        &\quad \quad\geq H(\bos{p}) + H(\bos{q}) + 2 D\left(\tfrac{\bos{p} \oplus \bos{q}}{2} \parallel \tfrac{\bos{m} \oplus \bos{j}}{2}\right),
    \end{align}
    and so the supermodularity of the Shannon entropy on the majorization lattice can be improved by the term $2 D\left(\tfrac{\bos{p} \oplus \bos{q}}{2} \parallel \tfrac{\bos{m} \oplus \bos{j}}{2}\right)$.
\end{proof}

Since $\bos{p} \oplus \bos{q} \neq \bos{p} \wedge \bos{q} \oplus \bos{p} \vee \bos{q}$ whenever $\bos{p} \nsim \bos{q}$ and $D(\bos{x}\parallel \bos{y}) > 0$ whenever $\bos{x} \neq \bos{y}$, Corollary~\ref{cor:refined_shannon_supermodularity} implies that, on the majorization lattice, the Shannon entropy is \textit{strictly} supermodular, i.e. the inequality that Cicalese and Vaccaro proved in Ref.~\cite{Cicalese2002} can \textit{never} be saturated for incomparable vectors.

A very similar approach using Lemma~\ref{lem:subadditive_tensor} and Theorem~\ref{th:subadditive_concatenation} allows one to also get a better bound on the subadditivity of the Shannon entropy on the majorization lattice, which is thus \textit{strict}.
\begin{cor}
    For any $\bos{p}, \bos{q} \in \mathcal{P}_d$, we have
    \begin{equation} \label{eq:refined_shannon_subadditivity}
        H(\bos{p}) + H(\bos{q}) \geq H(\bos{p} \wedge \bos{q}) + \eta,
    \end{equation}
    where $\eta = \max\left\{D\left(\bos{p} \wedge \bos{q} \parallel \bos{p} \otimes \bos{q}\right), 2D\left(\frac{\bos{p} \wedge \bos{q} \oplus \bos{e}}{2} \parallel \frac{\bos{p} \oplus \bos{q}}{2}\right)\right\}$, and $\bos e = (1, 0, \dots, 0)$.
\end{cor}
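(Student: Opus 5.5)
The plan is to prove two separate lower bounds on $H(\bos{p}) + H(\bos{q}) - H(\bos{p} \wedge \bos{q})$, one for each term inside the maximum defining $\eta$, and then take the larger one. Both bounds come from applying Lemma~\ref{lem:ho_verdu_shannon} to a majorization relation proven earlier: Lemma~\ref{lem:subadditive_tensor} for the first term and Theorem~\ref{th:subadditive_concatenation} for the second. In each case we need to check that the support condition in the relative entropy holds, which is automatic here because the majorized vector always has full-enough support.

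For the first term, Lemma~\ref{lem:subadditive_tensor} gives $\bos{p} \otimes \bos{q} \prec \bos{p} \wedge \bos{q}$. We pad $\bos{m}=\bos{p}\wedge\bos{q}$ with zeros to dimension $d^2$ and sort both vectors in non-increasing order. Lemma~\ref{lem:ho_verdu_shannon} then yields
\begin{equation}
H(\bos{p} \otimes \bos{q}) \geq H(\bos{m}) + D(\bos{m} \parallel \bos{p} \otimes \bos{q}).
\end{equation}
Additivity of the Shannon entropy under tensor products, $H(\bos{p} \otimes \bos{q}) = H(\bos{p}) + H(\bos{q})$, turns this into the first bound. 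The divergence must be understood with both arguments sorted, as in Lemma~\ref{lem:ho_verdu_shannon}, which is stated on $\pr$. I would remark that this is the intended meaning of $D(\bos{p}\wedge\bos{q}\parallel\bos{p}\otimes\bos{q})$.

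For the second term, Theorem~\ref{th:subadditive_concatenation} gives $\bos{p} \oplus \bos{q} \prec \bos{m} \oplus \bos{e}$. Dividing by $2$ preserves majorization, so both vectors become probability vectors in $\mathcal{P}^\da_{2d}$. Lemma~\ref{lem:ho_verdu_shannon} then gives
\begin{equation}
H\left(\tfrac{\bos{p} \oplus \bos{q}}{2}\right) \geq H\left(\tfrac{\bos{m} \oplus \bos{e}}{2}\right) + D\left(\tfrac{\bos{m} \oplus \bos{e}}{2} \parallel \tfrac{\bos{p} \oplus \bos{q}}{2}\right).
\end{equation}
Next we use the grouping identity $H\bigl(\tfrac{\bos{x} \oplus \bos{y}}{2}\bigr) = 1 + \tfrac{1}{2}\bigl(H(\bos{x}) + H(\bos{y})\bigr)$ for $\bos{x},\bos{y}$ probability vectors, with logarithms in base 2; this is the same computation used implicitly in Corollary~\ref{cor:refined_shannon_supermodularity}. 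Applying it to both sides, the constant $1$ cancels. Multiplying by $2$ and using $H(\bos{e}) = 0$ gives $H(\bos{p}) + H(\bos{q}) \geq H(\bos{m}) + 2D(\cdot\parallel\cdot)$, which is the second bound.

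Since both bounds hold, the inequality holds with the maximum of the two correction terms, which proves the statement. The only delicate step is the bookkeeping in the first bound: making sure the padding and sorting match the hypotheses of Lemma~\ref{lem:ho_verdu_shannon}, and that the support of the sorted $\bos{p}\otimes\bos{q}$ contains that of $\bos{m}$. The latter holds because $\bos{p}\otimes\bos{q}$ has at least as many nonzero entries as $\bos{m}$: its number of nonzero entries is the product of those of $\bos{p}$ and $\bos{q}$, while $\bos{m}$ has at most the maximum of the two. The rest is a direct computation.
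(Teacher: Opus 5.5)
Your proposal is correct and follows the route the paper indicates; the paper gives no written proof and only says ``a very similar approach using Lemma~\ref{lem:subadditive_tensor} and Theorem~\ref{th:subadditive_concatenation}''. You apply Lemma~\ref{lem:ho_verdu_shannon} to $\bos{p}\otimes\bos{q}\prec\bos{p}\wedge\bos{q}$ with tensor additivity, then to the halved relation $\bos{p}\oplus\bos{q}\prec(\bos{p}\wedge\bos{q})\oplus\bos{e}$ with the grouping identity and $H(\bos{e})=0$, and take the maximum — which is that approach, and your attention to sorting the tensor product and to the support condition supplies details the paper leaves implicit.
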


Another consequence of Corollaries~\ref{cor:alternative_supermodularity} and~\ref{cor:subadditivity} is the application to Tsallis entropies, whose supermodularity and subadditivity on the majorization lattice can be proven. Again, we obtain better bounds using a similar approach to the Shannon case. We will use the following lemma.
\begin{lem}[Ho and Verd\'u~\cite{HoVerdu2015}] \label{lem:ho_verdu_tsallis}
     Define $\phi_\alpha(x) = \frac{x^\alpha - x}{1 - \alpha}$ and $\Delta_{\phi_\alpha}(x, y) = (x - y)\phi'_\alpha(x) + \phi_\alpha(y) - \phi_\alpha(x)$, and let $\bos{p}, \bos{q} \in \pr$ be such that $\bos{p} \prec \bos{q}$. Then,
    \begin{equation} \label{eq:ho_verdu_tsallis}
        T_\alpha(\bos{p}) \geq T_\alpha(\bos{q}) + W_{\phi_\alpha}(\bos{q} \parallel \bos{p}),
    \end{equation}
    where $W_{\phi_\alpha}(\bos{q} \parallel \bos{p}) = \sum_i \Delta_{\phi_\alpha}(q_i, p_i)$.
\end{lem}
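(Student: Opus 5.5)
The plan is to reduce the inequality to a first-order (tangent-line) condition on the concave function $\phi_\alpha$, and then verify that condition by Abel summation against the partial-sum inequalities that define $\bos{p} \prec \bos{q}$.

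\textbf{Step 1: rewrite the gap as a linear sum.} Since $\sum_i p_i = \sum_i q_i = 1$, both entropies are sum-concave, with $T_\alpha(\bos{x}) = \sum_i \phi_\alpha(x_i)$. Here $\phi_\alpha$ is concave on $[0,1]$ for every $\alpha>0$, as already noted in Section~\ref{sec:pre}. Expanding the definition of $W_{\phi_\alpha}$, the terms $\phi_\alpha(p_i)-\phi_\alpha(q_i)$ cancel, leaving
\begin{equation*}
T_\alpha(\bos{p}) - T_\alpha(\bos{q}) - W_{\phi_\alpha}(\bos{q}\parallel\bos{p}) = \sum_{i=1}^d (p_i - q_i)\,\phi'_\alpha(q_i).
\end{equation*}
The claim is therefore equivalent to $\sum_i (p_i-q_i)\,c_i \geq 0$, where $c_i := \phi'_\alpha(q_i)$.

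\textbf{Step 2: sign of the linear sum via Abel summation.} Because $\bos{q}\in\pr$ is non-increasing and $\phi'_\alpha$ is non-increasing (by concavity), the sequence $c_1 \le c_2 \le \dots \le c_d$ is non-decreasing. Summing by parts with $S^\da_k$ gives
\begin{equation*}
\sum_{i=1}^d (p_i-q_i)c_i = \sum_{k=1}^{d-1}\bigl(S^\da_k(\bos{p})-S^\da_k(\bos{q})\bigr)(c_k-c_{k+1}) + \bigl(S^\da_d(\bos{p})-S^\da_d(\bos{q})\bigr)c_d .
\end{equation*}
The last term vanishes. In each remaining term, $S^\da_k(\bos{p})\le S^\da_k(\bos{q})$ by \eqref{eq:majorization} and $c_k-c_{k+1}\le 0$, so every summand is a product of two non-positive factors. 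Hence the sum is non-negative, which proves the lemma.

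\textbf{Step 3: the boundary issue (main obstacle).} The delicate point is $\alpha<1$, where $\phi'_\alpha(0)=+\infty$. This matters because $\bos{q}$ may have trailing zeros while $\bos{p}$, being more spread out, need not. In that case some $\Delta_{\phi_\alpha}(0,p_i)=-p_i\phi'_\alpha(0)+\phi_\alpha(p_i)-\phi_\alpha(0)$ equals $-\infty$. Then $W_{\phi_\alpha}(\bos{q}\parallel\bos{p})=-\infty$ and the inequality holds trivially, but the cancellation in Step 1 involves $\infty-\infty$ and is not valid as written.

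To handle this cleanly, I would first treat the case where $\bos{q}$ has full support, using Steps 1--2. I would then pass to the limit, replacing $\bos{q}$ by $\bos{q}_\epsilon = (1-\epsilon)\bos{q}+\epsilon\,\bos{u}$ with $\bos{u}$ uniform. Note that $\bos{q}_\epsilon$ stays sorted. Alternatively, one may simply observe directly that $W=-\infty$ whenever some $q_i=0<p_i$. For $\alpha>1$, $\phi'_\alpha$ is finite on $[0,1]$ and no such issue arises.
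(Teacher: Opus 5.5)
Your argument is correct for the lemma exactly as stated. It differs from the paper only in that the paper offers no proof: it imports the inequality from Ho and Verd\'u. You instead give a self-contained tangent-line plus Abel-summation argument, the standard proof of Karamata's inequality (Lemma~\ref{lem:karamata}).

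Beyond being elementary, your route has a further benefit: the identity in your Step~1 exposes what the statement actually says. Each term $\Delta_{\phi_\alpha}(q_i,p_i)=\phi_\alpha(p_i)-\bigl[\phi_\alpha(q_i)+\phi'_\alpha(q_i)(p_i-q_i)\bigr]$ is the gap between the concave $\phi_\alpha$ and its tangent at $q_i$, so it is $\le 0$. For instance, $\Delta_{\phi_2}(x,y)=-(x-y)^2$. As $\alpha\to1$ one gets $W=-D(\bos p\parallel\bos q)$ (in nats), not the $+D(\bos q\parallel\bos p)$ of Lemma~\ref{lem:ho_verdu_shannon}.

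So with these definitions $W_{\phi_\alpha}(\bos q\parallel\bos p)\le 0$, and the lemma is already implied by plain Schur-concavity. In particular, your proof cannot supply the positivity of $W$ that the paper invokes afterwards to claim strict supermodularity and strict subadditivity of $T_\alpha$. The genuine refinement expands around $p_i$ instead:
\[
T_\alpha(\bos p)-T_\alpha(\bos q)=\sum_i (p_i-q_i)\,\phi'_\alpha(p_i)+\sum_i\bigl[\phi_\alpha(p_i)+(q_i-p_i)\,\phi'_\alpha(p_i)-\phi_\alpha(q_i)\bigr].
\]
The second sum is a nonnegative Bregman-type divergence. It equals $D(\bos q\parallel\bos p)$ (in nats) in the Shannon limit and $\|\bos q-\bos p\|_2^2$ for $\alpha=2$. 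The first sum is nonnegative by exactly your Step~2 with $c_i=\phi'_\alpha(p_i)$, which is non-decreasing because $\bos p\in\pr$ as well. This version is also cleaner at the boundary. Since $\bos p\prec\bos q$ with both sorted, $p_i=0$ forces $q_i=0$, so only $0\cdot\infty$ terms (set to $0$) can occur, never $-\infty$.

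Two small repairs to your Step~3 are needed.

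First, perturbing only $\bos q$ to $\bos q_\epsilon$ can destroy the hypothesis. If $\bos p=\bos q$ is not uniform, then $S^\da_1(\bos q_\epsilon)<S^\da_1(\bos p)$, so $\bos p\prec\bos q_\epsilon$ fails. Mix both vectors with $\bos u$ using the same $\epsilon$ instead. This preserves majorization, since $S^\da_k\bigl((1-\epsilon)\bos x+\epsilon\bos u\bigr)=(1-\epsilon)S^\da_k(\bos x)+\epsilon k/d$ for sorted $\bos x$.

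Second, if you use the direct observation that $W=-\infty$ instead, it does not cover the case where $\bos p$ and $\bos q$ have the same trailing zeros. There you need the convention $\Delta_{\phi_\alpha}(0,0)=0$, i.e.\ restrict everything to the common support, where your Steps~1--2 apply verbatim.
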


We can now show the strict supermodularity and strict subadditivity of Tsallis entropies on the majorization lattice.
\begin{cor} \label{cor:refined_tsallis_supermodularity}
    For any $\bos{p}, \bos{q} \in \mathcal{P}_d$ and any $\alpha \in \mathbb{R}^+$,
    \begin{equation}
        T_\alpha(\bos{p} \wedge \bos{q}) + T_\alpha(\bos{p} \vee \bos{q}) \geq T_\alpha(\bos{p}) + T_\alpha(\bos{q}) + \tau,
    \end{equation}
    where $\tau = 2^\alpha W_{\phi_\alpha}(\tfrac{\bos{p} \oplus \bos{q}}{2}\parallel\tfrac{\bos{p} \wedge \bos{q} \oplus \bos{p} \vee \bos{q}}{2})$.
\end{cor}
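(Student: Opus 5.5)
The plan is to apply Lemma~\ref{lem:ho_verdu_tsallis} not to $\bos{p},\bos{q}$ themselves but to the normalized concatenations appearing in Theorem~\ref{th:concatenation}, exactly as in the proof of Corollary~\ref{cor:refined_shannon_supermodularity}. The only new ingredient is that the Tsallis entropy is not additive under the rescaling by $1/2$, so the constants have to be tracked; this is where the factor $2^\alpha$ in $\tau$ comes from.

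First, I set $\bos{m}=\bos{p}\wedge\bos{q}$, $\bos{j}=\bos{p}\vee\bos{q}$, $\bos{A}=\bos{p}\oplus\bos{q}$ and $\bos{B}=\bos{m}\oplus\bos{j}$. Since both vectors sum to $2$ and are sorted non-increasingly by the definition of $\oplus$, the vectors $\tfrac{\bos{A}}{2}$ and $\tfrac{\bos{B}}{2}$ lie in $\mathcal{P}^\da_{2d}$. Theorem~\ref{th:concatenation} is invariant under the common rescaling, so it gives $\tfrac{\bos{B}}{2}\prec\tfrac{\bos{A}}{2}$. Lemma~\ref{lem:ho_verdu_tsallis}, applied in dimension $2d$ with the roles $\bos{p}\to\tfrac{\bos{B}}{2}$ and $\bos{q}\to\tfrac{\bos{A}}{2}$, then yields
\begin{equation*}
T_\alpha\!\left(\tfrac{\bos{B}}{2}\right)\geq T_\alpha\!\left(\tfrac{\bos{A}}{2}\right)+W_{\phi_\alpha}\!\left(\tfrac{\bos{A}}{2}\parallel\tfrac{\bos{B}}{2}\right).
\end{equation*}

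Second, I rewrite both entropies in terms of the $d$-dimensional ones. For any $\bos{x},\bos{y}\in\mathcal{P}_d$ the multiset of entries of $\bos{x}\oplus\bos{y}$ is the union of those of $\bos{x}$ and $\bos{y}$. Hence
\begin{equation*}
T_\alpha\!\left(\tfrac{\bos{x}\oplus\bos{y}}{2}\right)=\frac{2^{-\alpha}\left(\|\bos{x}\|_\alpha^\alpha+\|\bos{y}\|_\alpha^\alpha\right)-1}{1-\alpha}=2^{-\alpha}\left(T_\alpha(\bos{x})+T_\alpha(\bos{y})\right)+c_\alpha,
\end{equation*}
with $c_\alpha=\frac{2^{1-\alpha}-1}{1-\alpha}$. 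This constant depends only on $\alpha$, so it cancels when the expression is inserted on both sides of the previous inequality. Multiplying through by $2^\alpha>0$ gives
\begin{equation*}
T_\alpha(\bos{m})+T_\alpha(\bos{j})\geq T_\alpha(\bos{p})+T_\alpha(\bos{q})+2^\alpha W_{\phi_\alpha}\!\left(\tfrac{\bos{A}}{2}\parallel\tfrac{\bos{B}}{2}\right),
\end{equation*}
which is the claim. The case $\alpha=1$ is Corollary~\ref{cor:refined_shannon_supermodularity} and is obtained by continuity.

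I expect the main obstacle to be the bookkeeping rather than any new inequality. One point is to make sure the affine constant $c_\alpha$ really cancels for every $\alpha\neq1$, including $\alpha<1$, where the prefactor flips sign. The other is to check that Lemma~\ref{lem:ho_verdu_tsallis} may legitimately be used with vectors in $\mathcal{P}^\da_{2d}$ that contain zero entries. For $\alpha<1$, $\phi_\alpha'$ diverges at $0$, so terms of $W_{\phi_\alpha}$ whose first argument vanishes need the conventions of the lemma: if an entry of $\tfrac{\bos{A}}{2}$ is zero, the ordering forces the corresponding tail to be handled consistently with $\tfrac{\bos{B}}{2}$. I would first verify the lemma's domain assumptions in~\cite{HoVerdu2015} and, if needed, restrict to the common support or argue by a limiting perturbation.
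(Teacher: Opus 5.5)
Your proposal is correct and matches the paper's proof: apply Lemma~\ref{lem:ho_verdu_tsallis} to $\tfrac{\bos{m}\oplus\bos{j}}{2}\prec\tfrac{\bos{p}\oplus\bos{q}}{2}$, which follows from Theorem~\ref{th:concatenation}, and then multiply through by $2^\alpha$. Your identity $T_\alpha\bigl(\tfrac{\bos{x}\oplus\bos{y}}{2}\bigr)=2^{-\alpha}\bigl(T_\alpha(\bos{x})+T_\alpha(\bos{y})\bigr)+c_\alpha$ simply makes explicit the implication step that the paper leaves unstated.
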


\begin{proof}
     From Theorem~\ref{th:concatenation} we have that $\tfrac{\bos{p} \oplus \bos{q}}{2} \succ \tfrac{\bos{m} \oplus \bos{j}}{2}$, and using Lemma~\ref{lem:ho_verdu_tsallis} we directly obtain
     \begin{equation}
         \begin{aligned}
             & T_\alpha\left(\tfrac{\bos{m} \oplus \bos{j}}{2}\right) \geq T_\alpha\left(\tfrac{\bos{p} \oplus \bos{q}}{2}\right) + W_{\phi_\alpha}\left(\tfrac{\bos{p} \oplus \bos{q}}{2}\parallel \tfrac{\bos{m} \oplus \bos{j}}{2}\right),\\
             \Rightarrow \quad & T_\alpha(\bos{p} \wedge \bos{q}) + T_\alpha(\bos{p} \vee \bos{q}) \\
             & \quad \quad \geq T_\alpha(\bos{p}) + T_\alpha(\bos{q}) + 2^\alpha W_{\phi_\alpha}\left(\tfrac{\bos{p} \oplus \bos{q}}{2}\parallel \tfrac{\bos{m} \oplus \bos{j}}{2}\right),
         \end{aligned}
     \end{equation}

     and so the supermodularity of Tsallis entropies on the majorization lattice is improved by the term $2^\alpha W_{\phi_\alpha}\left(\tfrac{\bos{p} \oplus \bos{q}}{2}\parallel \tfrac{\bos{m} \oplus \bos{j}}{2}\right)$.
\end{proof}

Since $\bos{p} \oplus \bos{q} \neq \bos{p} \wedge \bos{q} \oplus \bos{p} \vee \bos{q}$ whenever $\bos{p} \nsim \bos{q}$ and $W_{\phi_\alpha}(\bos{x}\parallel \bos{y}) > 0$ whenever $\bos{x} \neq \bos{y}$~\cite{HoVerdu2015}, Corollary~\ref{cor:refined_shannon_supermodularity} thus implies that the Tsallis entropy is \textit{strictly} supermodular on the majorization lattice.

Just like for the Shannon entropy, a similar approach using Lemma~\ref{lem:subadditive_tensor} and Theorem~\ref{th:subadditive_concatenation} allows one to get a better bound on the subadditivity of the Tsallis entropy on the majorization lattice.
\begin{cor} \label{cor:refined_tsalllis_subadditivity}
    For any $\bos{p}, \bos{q} \in \mathcal{P}_d$ and any $\alpha \in \mathbb{R}^+$,
    \begin{equation}
        T_\alpha(\bos{p}) + T_\alpha(\bos{q}) \geq T_\alpha(\bos{p} \wedge \bos{q}) + \tau,
    \end{equation}
    where $\tau = 2^\alpha W_{\phi_\alpha}\left(\tfrac{\bos{m} \oplus \bos e}{2}\parallel \tfrac{\bos{p} \oplus \bos{q}}{2}\right)$, and $\bos e = (1, 0, \dots, 0)$.
\end{cor}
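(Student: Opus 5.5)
Following the proof of Corollary~\ref{cor:refined_tsallis_supermodularity}, the idea is to apply the Ho--Verd\'u refinement (Lemma~\ref{lem:ho_verdu_tsallis}) to the normalized majorization precursor of Theorem~\ref{th:subadditive_concatenation}. With $\bos{m} = \bos{p}\wedge\bos{q}$, Theorem~\ref{th:subadditive_concatenation} gives $\bos{p}\oplus\bos{q} \prec \bos{m}\oplus\bos{e}$. Dividing both sides by $2$ preserves majorization, so
\begin{equation}
\tfrac{\bos{p}\oplus\bos{q}}{2} \prec \tfrac{\bos{m}\oplus\bos{e}}{2},
\end{equation}
and both vectors lie in $\mathcal{P}^\da_{2d}$ (they are sorted by definition of $\oplus$). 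Lemma~\ref{lem:ho_verdu_tsallis} in dimension $2d$ then yields
\begin{equation}
T_\alpha\!\left(\tfrac{\bos{p}\oplus\bos{q}}{2}\right) \geq T_\alpha\!\left(\tfrac{\bos{m}\oplus\bos{e}}{2}\right) + W_{\phi_\alpha}\!\left(\tfrac{\bos{m}\oplus\bos{e}}{2}\parallel\tfrac{\bos{p}\oplus\bos{q}}{2}\right).
\end{equation}

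Next, I will convert Tsallis entropies of halved concatenations back into sums of Tsallis entropies. Writing $T_\alpha(\bos{x}) = \sum_i \phi_\alpha(x_i)$ and $\bos{x} = \tfrac{\bos{u}\oplus\bos{v}}{2}$ with $\bos{u},\bos{v}$ probability vectors, we have
\begin{equation}
\sum_i \phi_\alpha(x_i) = \tfrac{1}{1-\alpha}\Big(2^{-\alpha}\big(\|\bos{u}\|_\alpha^\alpha + \|\bos{v}\|_\alpha^\alpha\big) - 1\Big),
\end{equation}
which equals $2^{-\alpha}\big(T_\alpha(\bos{u}) + T_\alpha(\bos{v})\big) + c_\alpha$. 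The constant $c_\alpha = \tfrac{2^{1-\alpha}-1}{1-\alpha}$ depends only on $\alpha$. It is therefore identical on both sides of the Ho--Verd\'u inequality and cancels.

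Multiplying through by $2^\alpha$ then gives
\begin{equation}
T_\alpha(\bos{p}) + T_\alpha(\bos{q}) \geq T_\alpha(\bos{m}) + T_\alpha(\bos{e}) + 2^\alpha W_{\phi_\alpha}\!\left(\tfrac{\bos{m}\oplus\bos{e}}{2}\parallel\tfrac{\bos{p}\oplus\bos{q}}{2}\right).
\end{equation}
Since $T_\alpha(\bos{e}) = \tfrac{1}{1-\alpha}(1-1) = 0$, the claimed inequality with the stated $\tau$ follows.

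The only delicate step is the bookkeeping in the rescaling, that is, checking that the affine offset cancels and that the factor is exactly $2^\alpha$. Two further points need care. The first is that the padded vectors indeed lie in $\mathcal{P}^\da_{2d}$, as Lemma~\ref{lem:ho_verdu_tsallis} requires. The second is that the support conventions for $W_{\phi_\alpha}$ make sense; $\phi'_\alpha$ is evaluated at entries of the first argument, which may be $0$ when $\alpha<1$. I expect to handle the latter by accepting the Ho--Verd\'u lemma as stated, with its implicit conventions. Strictness for incomparable $\bos{p},\bos{q}$ would additionally require $\bos{p}\oplus\bos{q}\neq\bos{m}\oplus\bos{e}$, but the inequality itself does not need this.
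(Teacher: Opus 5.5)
Your proposal is correct and follows the argument the paper intends. The paper gives no separate proof and only says to repeat the proof of Corollary~\ref{cor:refined_tsallis_supermodularity} with the normalized Theorem~\ref{th:subadditive_concatenation} in place of Theorem~\ref{th:concatenation}; your bookkeeping of the offset $c_\alpha = \frac{2^{1-\alpha}-1}{1-\alpha}$, the factor $2^\alpha$, and $T_\alpha(\bos e)=0$ fills in exactly those omitted steps. Like the paper, you take Lemma~\ref{lem:ho_verdu_tsallis} at face value, so convention issues in $W_{\phi_\alpha}$ (such as $\phi_\alpha'(0)=+\infty$ for $\alpha<1$, which you rightly flag) come from that lemma, not from your argument.
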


Tsallis entropies are not additive over a tensor product in general, so one cannot use Lemma~\ref{lem:subadditive_tensor}, contrarily to the Shannon case. Similarly, Corollary~\ref{cor:refined_tsalllis_subadditivity} implies that Tsallis entropies are \textit{strictly} subadditive on the majorization lattice.

Finally, we can proceed in similar ways for the R\'enyi entropy. The two following lemmas will be useful.
\begin{lem}[Ho and Verd\'u~\cite{HoVerdu2015}] \label{lem:ho_verdu_renyi}
    Let $\bos{p}, \bos{q} \in \mathcal{P}_d$ be such that $\bos{p} \prec \bos{q}$, and define $\phi_\alpha$ and $\Delta_{\phi_\alpha}$ as in Lemma~\ref{lem:ho_verdu_tsallis}. Then,
    \begin{equation} \label{eq:ho_verdu_renyi}
        H_\alpha(\bos{p}) \geq H_\alpha(\bos{q}) + \log_2(e) W_{\phi_\alpha}(\bos{q} \parallel \bos{p}),
    \end{equation}
    where $W_{\phi_\alpha}(\bos{q} \parallel \bos{p}) = \sum_i \Delta_{\phi_\alpha}(q_i, p_i)$, and where the $\log_2(e)$ comes from the conversion from nats to bits.
\end{lem}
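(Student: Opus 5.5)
The plan is to reduce the statement to the Tsallis case, Lemma~\ref{lem:ho_verdu_tsallis}, through the monotone relation between R\'enyi and Tsallis entropies. Write $S(\bos x)=\|\bos x\|_\alpha^\alpha$. Then $T_\alpha=(S-1)/(1-\alpha)$ and, in nats, $H_\alpha=g(T_\alpha)$ with
\begin{equation*}
g(t)=\frac{\ln\bigl(1+(1-\alpha)t\bigr)}{1-\alpha},\qquad g'(t)=\frac{1}{1+(1-\alpha)t}=\frac{1}{S}.
\end{equation*}
By the mean value theorem, $H_\alpha(\bos p)-H_\alpha(\bos q)=g'(\xi)\,\bigl(T_\alpha(\bos p)-T_\alpha(\bos q)\bigr)$ for some $\xi$ between $T_\alpha(\bos q)$ and $T_\alpha(\bos p)$. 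Since $\bos p\prec\bos q$, Lemma~\ref{lem:ho_verdu_tsallis} gives $T_\alpha(\bos p)-T_\alpha(\bos q)\ge W_{\phi_\alpha}(\bos q\parallel\bos p)\ge 0$. It therefore suffices to show $g'(\xi)\ge 1$ on the relevant range. Multiplying by $\log_2 e$ at the end converts nats to bits.

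\textbf{Case $\alpha>1$.} Here $S(\bos x)=\sum_i x_i^\alpha\le 1$ for every probability vector, so $1+(1-\alpha)t=S\le 1$ at both endpoints. Hence $g'\ge 1$ throughout the interval, and the argument above closes immediately.

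\textbf{Case $\alpha\in(0,1)$.} This is the main obstacle, because now $S\ge 1$ and $g'=1/S\le 1$, so the comparison with Tsallis loses a factor. For this case I would rerun Ho and Verd\'u's argument directly on $\ln S$ rather than passing through $T_\alpha$.

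\emph{First ingredient.} Assume $\bos p,\bos q$ are sorted in non-increasing order. The key fact is that $\sum_i (p_i-q_i)\,\phi_\alpha'(q_i)\ge 0$. This follows by Abel summation: $\phi_\alpha'$ is non-increasing (since $\phi_\alpha$ is concave), and the partial sums $S^\da_k(\bos q)-S^\da_k(\bos p)$ are non-negative with total zero. This is exactly what produces $T_\alpha(\bos p)-T_\alpha(\bos q)\ge W_{\phi_\alpha}(\bos q\parallel\bos p)$.

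\emph{Second ingredient.} For the logarithm, I would use $\ln(S_p/S_q)\ge 1-S_q/S_p$ and try to absorb the normalization $1/S_p$. The idea is to apply the same Abel-summation argument to the rescaled concave function $x\mapsto x^\alpha/S(\bos q)$, i.e.\ to work with the tangent of $\ln S$ at $\bos q$ instead of that of $S$. The goal is to show that the Bregman-type remainder of $\ln S$ dominates $\sum_i\Delta_{\phi_\alpha}(q_i,p_i)$.

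If this direct domination fails, the fallback is to invoke the original statement of Ho and Verd\'u~\cite{HoVerdu2015} for this range of $\alpha$, which is how the lemma is attributed. The case $\alpha>1$, the one needed for the R\'enyi supermodularity application, is fully covered by the reduction above.
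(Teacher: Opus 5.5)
The gap is the case $\alpha\in(0,1)$. You give no argument there, and none can exist: if $W_{\phi_\alpha}\ge 0$, as you assert and as the paper's strictness claims require, the inequality is false for every $\alpha\in(0,1)$.

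Take $\bos{p}=(1/d,\dots,1/d)$ and $\bos{q}=\bos{e}$ with $d\ge 2$, so that $\bos{p}\prec\bos{q}$. Since $\phi'_\alpha(p_i)$ is constant in $i$, we get $\sum_i(q_i-p_i)\phi'_\alpha(p_i)=0$. Hence $W_{\phi_\alpha}(\bos{e}\parallel\bos{p})=T_\alpha(\bos{p})-T_\alpha(\bos{e})=\frac{d^{1-\alpha}-1}{1-\alpha}$. On the other side, $H_\alpha(\bos{p})-H_\alpha(\bos{e})$ equals $\ln d$ nats. Because $e^{x}>1+x$ for $x=(1-\alpha)\ln d>0$, $W$ strictly exceeds $\ln d$; for example, $d=2$, $\alpha=1/2$ gives $2(\sqrt{2}-1)\approx 0.83>\ln 2\approx 0.69$.

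So neither the ``direct domination'' nor the fallback citation can close this case. What your $\ln x\ge 1-1/x$ step actually proves for $\alpha<1$ is the normalized bound $H_\alpha(\bos{p})-H_\alpha(\bos{q})\ge \log_2(e)\,W_{\phi_\alpha}(\bos{q}\parallel\bos{p})/\|\bos{p}\|_\alpha^\alpha$. The example shows that the factor $1/\|\bos{p}\|_\alpha^\alpha\le 1$ cannot be dropped. The paper's use of this lemma for $\alpha<1$, in the R\'enyi subadditivity corollary, would need the same correction.

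For $\alpha>1$ your reduction is correct. Since the paper only cites Ho and Verd\'u, it gives a self-contained proof. Because $\|\bos{q}\|_\alpha^\alpha\le 1$, one has $\frac{1}{\alpha-1}\ln\frac{\|\bos{q}\|_\alpha^\alpha}{\|\bos{p}\|_\alpha^\alpha}\ge T_\alpha(\bos{p})-T_\alpha(\bos{q})$, and the Tsallis bound finishes.

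There is one inconsistency to fix. Your first ingredient evaluates $\phi'_\alpha$ at $q_i$, which is the definition of $\Delta_{\phi_\alpha}$ as printed. Under that definition, concavity of $\phi_\alpha$ forces $\Delta_{\phi_\alpha}(q_i,p_i)\le 0$. That contradicts your claim $W\ge 0$, and it makes the lemma a trivial consequence of Schur-concavity for every $\alpha$. The nonnegative divergence is the Bregman form $\Delta_{\phi_\alpha}(x,y)=(x-y)\phi'_\alpha(y)+\phi_\alpha(y)-\phi_\alpha(x)$, which reduces to $D(\bos{q}\parallel\bos{p})$ as $\alpha\to 1$. Your Abel-summation argument goes through verbatim with $\phi'_\alpha(p_i)$, which is also non-decreasing in $i$. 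Your $\alpha>1$ step only uses $T_\alpha(\bos{p})-T_\alpha(\bos{q})\ge\max\{W,0\}$, so it is valid under either reading.
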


\begin{lem}[{Topkis~\cite{Topkis1998}}] \label{lem:log-submodularity}
    If a function $F$ is increasing\footnote{Increasing is to be understood as $\bos{p} \prec \bos{q} \Rightarrow F(\bos{p}) \leq F(\bos{q})$, which corresponds to Schur-convexity for the majorization partial ordering.} (or decreasing) and submodular on a lattice, then it is also log-submodular, meaning that we have
    \begin{equation}
        F(\bos{p}) F(\bos{q}) \geq F(\bos{p} \wedge \bos{q}) F(\bos{p} \vee \bos{q}).
    \end{equation}
\end{lem}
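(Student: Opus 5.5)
The plan is a direct algebraic argument on the four real numbers involved. We assume, as is implicit in the statement and in its intended application to $\ell_p$ norms, that $F$ takes nonnegative values. Fix $\bos{p}, \bos{q}$ and set
\begin{equation*}
a := F(\bos{p} \wedge \bos{q}), \quad b := F(\bos{p}), \quad c := F(\bos{q}), \quad d := F(\bos{p} \vee \bos{q}).
\end{equation*}
We must show $bc \geq ad$. Two facts are available. Submodularity gives $a + d \leq b + c$. Monotonicity, combined with the lattice relations $\bos{p} \wedge \bos{q} \prec \bos{p}, \bos{q} \prec \bos{p} \vee \bos{q}$, gives an ordering of the four numbers.

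First, use submodularity together with $a \geq 0$ to eliminate $d$:
\begin{equation*}
ad \leq a(b + c - a).
\end{equation*}
It then suffices to prove $bc - a(b+c-a) \geq 0$. This expression factors as $(b-a)(c-a)$.

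Second, check the sign of this product using monotonicity. If $F$ is increasing, then $a \leq b$ and $a \leq c$, so both factors are nonnegative. If $F$ is decreasing, then $a \geq b$ and $a \geq c$, so both factors are nonpositive. In either case $(b-a)(c-a) \geq 0$, which yields $bc \geq ad$, that is, $F(\bos{p})F(\bos{q}) \geq F(\bos{p}\wedge\bos{q})F(\bos{p}\vee\bos{q})$.

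The main subtlety is not in the computation, which is elementary, but in the hypotheses. The step $ad \leq a(b+c-a)$ requires $a \geq 0$, so nonnegativity of $F$ must be made explicit; this holds for $\|\cdot\|_\alpha^\alpha$. Note also that only the comparison of the meet with $\bos{p}$ and $\bos{q}$ is used; the position of $d$ in the ordering is never needed.
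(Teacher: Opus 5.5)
Your proof is correct. The paper gives no proof of this lemma and only cites Topkis. The usual argument for the cited result is a composition lemma: if $F$ is monotone and submodular and $g$ is increasing and concave on an interval containing the range of $F$, then $g\circ F$ is submodular. To see this, note that the pair $\bigl(F(\bos p\wedge\bos q),\,F(\bos p)+F(\bos q)-F(\bos p\wedge\bos q)\bigr)$ majorizes $\bigl(F(\bos p),F(\bos q)\bigr)$. Apply concavity of $g$, then use monotonicity of $g$ together with submodularity to replace the second entry by $F(\bos p\vee\bos q)$. Log-submodularity is the case $g=\log$.

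Your factorization $bc-a(b+c-a)=(b-a)(c-a)$ is the product-specific shortcut of this argument. It avoids logarithms altogether and requires only $F(\bos p\wedge\bos q)\ge 0$, rather than strict positivity of all the values involved. As you note, it also uses only the position of the meet relative to $\bos p$ and $\bos q$. In exchange, the composition route covers every increasing concave transform, not just $\log$.

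Your insistence on a sign hypothesis is well founded, because the statement as printed fails without it. On the four-element lattice $\{\bos m,\bos p,\bos q,\bos j\}$, take $F(\bos m)=-10$, $F(\bos p)=F(\bos q)=-6$, $F(\bos j)=-5$. This $F$ is increasing and submodular, yet $F(\bos p)F(\bos q)=36<50=F(\bos m)F(\bos j)$. It is the positivity of $\|\bos p\|_\alpha^\alpha$ that makes the paper's application to R\'enyi entropies legitimate.
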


This lemma is quite useful, as it allows us to bridge the gap between the majorization relation on direct sums of distributions of Theorem~\ref{th:concatenation} back to sums of logarithms, which we need for R\'enyi entropies.

\begin{cor}\label{cor:renyi_supermodularity}
    R\'enyi entropies of $\alpha > 1$ are supermodular on the majorization lattice, and so for any $\bos{p}, \bos{q} \in \mathcal{P}_d$  and any $\alpha \in (1, \infty)$,
    \begin{equation}
        H_\alpha(\bos{p}) + H_\alpha(\bos{q}) \leq H_\alpha(\bos{p} \wedge \bos{q}) + H_\alpha (\bos{p} \vee \bos{q}).
    \end{equation}
\end{cor}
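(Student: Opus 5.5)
We prove the statement by reducing it to a multiplicative inequality for the power sums $G(\bos x) := ||\bos x||_\alpha^\alpha = \sum_i x_i^\alpha$. For $\alpha>1$ we have $\tfrac{1}{1-\alpha}<0$, so
\begin{equation}
H_\alpha(\bos{p}) + H_\alpha(\bos{q}) = \tfrac{1}{1-\alpha}\log\left(G(\bos p)G(\bos q)\right),
\end{equation}
and similarly for the meet and join. The claimed supermodularity of $H_\alpha$ is therefore equivalent to
\begin{equation}
G(\bos p)\,G(\bos q) \geq G(\bos p\wedge\bos q)\,G(\bos p\vee\bos q),
\end{equation}
that is, to log-submodularity of $G$ on the majorization lattice.

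First, I establish that $G$ is submodular. Since $\varphi(x)=x^\alpha$ is convex on $[0,1]$ for $\alpha>1$, $G$ is sum-convex. The sum-convex mirror of Corollary~\ref{cor:alternative_supermodularity} (Theorem~\ref{th:concatenation} combined with Lemma~\ref{lem:karamata} applied to the convex $\varphi$ on $\bos A=\bos p\oplus\bos q$ and $\bos B=\bos m\oplus\bos j$) then gives
\begin{equation}
G(\bos p)+G(\bos q)\geq G(\bos m)+G(\bos j).
\end{equation}
By Lemma~\ref{lem:karamata}, $G$ is also Schur-convex, hence increasing on the lattice in the sense of Lemma~\ref{lem:log-submodularity}. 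Both hypotheses of Lemma~\ref{lem:log-submodularity} therefore hold, and it yields the product inequality. Taking logarithms and multiplying by the negative constant $\tfrac{1}{1-\alpha}$ reverses the inequality and gives the corollary.

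The main obstacle is making sure the log-submodularity step is valid rather than merely cited. I would verify it directly as follows. Write $a=G(\bos m)$, $b=G(\bos j)$, $x=G(\bos p)$, $y=G(\bos q)$. Because $\bos m\prec \bos p,\bos q\prec \bos j$ and $G$ is increasing, $a\le \min(x,y)$ and $\max(x,y)\le b$. Combined with $x+y\ge a+b$ and positivity of all four quantities, fixing the sum $s=x+y$ makes $xy$ smallest when the pair is most spread. Spreading is limited by the interval $[a,b]$ and by $s\ge a+b$, so the extreme case is $\{x,y\}=\{a,s-a\}$ with $s-a\ge b$, giving $xy\ge a(s-a)\ge ab$. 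This elementary check confirms the use of Lemma~\ref{lem:log-submodularity}; the remaining steps are routine sign bookkeeping.
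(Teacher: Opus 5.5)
Your proposal is correct and follows the paper's proof: you get submodularity of $\|\cdot\|_\alpha^\alpha$ from Theorem~\ref{th:concatenation} and Lemma~\ref{lem:karamata}, upgrade it to log-submodularity via Lemma~\ref{lem:log-submodularity}, and take logarithms with the negative prefactor $\tfrac{1}{1-\alpha}$. Your direct check of the log-submodularity step is a good addition, with one correction: the pair $\{a,s-a\}$ generally lies outside $[a,b]$, so it is not the extreme case, but the bound $xy\ge a(s-a)\ge ab$ still holds because it only needs $x,y\ge a$, $x+y\ge a+b$ and $a>0$.
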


\begin{proof}
    Consider the function $F(\bos{p}) = ||\bos{p}||_\alpha^\alpha = \sum_i p_i^\alpha$. Since the function $f(x) = x^\alpha$ is convex for $x \in [0, 1], \alpha \in (1, +\infty)$, $F(\bos{p})$ is clearly sum-convex for any $\alpha > 1$. By Corollary~\ref{cor:alternative_supermodularity} applied to $(-F)$, $F$ is thus submodular on the majorization lattice, and by Lemma~\ref{lem:log-submodularity} we have that for any pair of probability distributions $\bos{p}, \bos{q} \in \mathcal{P}_d$ and any $\alpha > 1$,
    \begin{align}
        ||\bos{p}||_\alpha^\alpha||\bos{q}||_\alpha^\alpha &\geq ||\bos{p} \wedge \bos{q}||_\alpha^\alpha||\bos{p} \vee \bos{q}||_\alpha^\alpha, \label{eq:log-submodularity}\\
        \Rightarrow \quad H_\alpha(\bos{p}) + H_\alpha(\bos{q}) &\leq H_\alpha(\bos{p} \wedge \bos{q}) + H_\alpha (\bos{p} \vee \bos{q}) \label{eq:renyi_supermodularity},
    \end{align}
    where we have used the fact that $H_\alpha(\bos{p}) = \frac{1}{1-\alpha}\log(||\bos{p}||_\alpha^\alpha)$, and so all R\'enyi entropies for $\alpha > 1$ are supermodular on the majorization lattice.
\end{proof}

This result is, by itself, quite interesting, as it shows that supermodularity on the majorization lattice is not merely a property enjoyed by the Shannon entropy, but also by a much broader class of entropies which are widely used in physics. One could hope that supermodularity also holds for the $\alpha < 1$ R\'enyi entropies, however numerical counterexamples exist. This might seem surprising at first given that one would expect a result analoguous to Lemma~\ref{lem:log-submodularity} to hold for supermodularity as well, however in that case it turns out that the implication goes the opposite way: log-supermodularity implies supermodularity, but not the other way around, and so supermodularity of the function $||\bos{p}||_\alpha^\alpha$ for $\alpha < 1$ (which holds by Corollary~\ref{cor:alternative_supermodularity}) does not imply that log-supermodularity also holds. It is interesting to note that supermodularity is the only non-trivial property of the Shannon entropy needed in Ref.~\cite{Cicalese2013} to show that the function $d_H(\bos{p}, \bos{q}) = H(\bos{p}) + H(\bos{q}) - 2 H(\bos{p} \vee \bos{q})$ is a distance on the majorization lattice. Therefore, Corollary~\ref{cor:renyi_supermodularity} implies that one can define a similar family of distances on the majorization lattice using R\'enyi entropies with $\alpha > 1$.

It is interesting to note that a very similar reasoning to that of the proof of Lemma~\ref{lem:log-submodularity} also holds for an analoguous property, namely that of log-subadditivity. Subadditivity holds for the $||\bos{x}||_\alpha^\alpha$ functions for $\alpha < 1$ for which we could use Corollary~\ref{cor:subadditivity} and log-subadditivity to show that R\'enyi entropies are also subadditive on the majorization lattice for $\alpha < 1$. However, for R\'enyi entropies, the majorization precursor of Theorem~\ref{lem:subadditive_tensor} (which works for all $\alpha$) coupled with Lemma~\ref{lem:ho_verdu_renyi} gives a better bound on subadditivity, as it is straightforward to show the following.

\begin{cor}\label{cor:renyi_subadditivity}
    For any $\bos{p}, \bos{q} \in \mathcal{P}_d$, and any $\alpha \in \mathbb{R}^+$,
    \begin{equation}
        H_\alpha(\bos{p}) + H_\alpha(\bos{q}) \geq H_\alpha(\bos{p} \wedge \bos{q}) + \eta,
    \end{equation}
    where $\eta = \log_2(e) W_{\phi_\alpha}(\bos{p} \wedge \bos{q} \parallel \bos{p} \otimes \bos{q})$, and so R\'enyi entropies are \textit{strictly} subadditive on the majorization lattice.
\end{cor}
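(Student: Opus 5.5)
The plan is to combine Lemma~\ref{lem:subadditive_tensor} with the Rényi version of the Ho--Verd\'u bound, Lemma~\ref{lem:ho_verdu_renyi}, and then use additivity of Rényi entropies over tensor products.

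\textbf{Step 1 (main inequality).} By Lemma~\ref{lem:subadditive_tensor}, $\bos{p}\otimes\bos{q} \prec \bos{p}\wedge\bos{q}$. Both vectors live in $\mathcal{P}_{d^2}$ once $\bos{m}=\bos{p}\wedge\bos{q}$ is padded with zeros, following the paper's convention, and after sorting in non-increasing order. Apply Lemma~\ref{lem:ho_verdu_renyi} with the majorized vector $\bos{p}\otimes\bos{q}$ and the majorizing vector $\bos{m}$. This gives
\[
H_\alpha(\bos{p}\otimes\bos{q}) \geq H_\alpha(\bos{m}) + \log_2(e)\, W_{\phi_\alpha}(\bos{m}\parallel \bos{p}\otimes\bos{q}).
\]

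\textbf{Step 2 (additivity).} Since
\[
\|\bos{p}\otimes\bos{q}\|_\alpha^\alpha=\sum_{i,j}p_i^\alpha q_j^\alpha=\|\bos{p}\|_\alpha^\alpha\|\bos{q}\|_\alpha^\alpha,
\]
the logarithm in the definition of $H_\alpha$ splits. Hence $H_\alpha(\bos{p}\otimes\bos{q})=H_\alpha(\bos{p})+H_\alpha(\bos{q})$ for every $\alpha\in\mathbb{R}^+\setminus\{1\}$. Substituting this into Step~1 gives the claimed inequality with $\eta=\log_2(e)W_{\phi_\alpha}(\bos{p}\wedge\bos{q}\parallel\bos{p}\otimes\bos{q})$. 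The case $\alpha=1$ is the Shannon case, which follows by the limit or directly from Lemma~\ref{lem:ho_verdu_shannon}.

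\textbf{Step 3 (strictness).} We use the fact, quoted from~\cite{HoVerdu2015} after Corollary~\ref{cor:refined_tsallis_supermodularity}, that $W_{\phi_\alpha}(\bos{x}\parallel\bos{y})>0$ whenever $\bos{x}\neq\bos{y}$. So it suffices to show $\bos{p}\otimes\bos{q}\neq\bos{p}\wedge\bos{q}$, as sorted vectors in $\mathcal{P}^\da_{d^2}$, whenever $\bos{p}\nsim\bos{q}$.

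\begin{itemize}
\item Suppose, for contradiction, that equality holds.
\item Then $\bos{p}\wedge\bos{q}=\bos{p}\otimes\bos{q}\prec\bos{p}$ trivially, so the meet has no more mass in its top entries than $\bos{p}$ or $\bos{q}$.
\item I would try to derive a contradiction from $S^\da_1(\bos{m})=\min\{p_1,q_1\}=p_1q_1$. This forces $p_1=1$ or $q_1=1$, i.e.\ one of the two vectors is $\bos{e}$ (or $p_1q_1=0$, which is impossible).
\item But $\bos{e}$ majorizes every vector, so $\bos{p}\sim\bos{q}$, contradicting incomparability.
\end{itemize}

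This strictness step is the only one needing care. One subtlety is that Lemma~\ref{lem:ho_verdu_renyi} is applied to sorted vectors, so $W_{\phi_\alpha}$ should be evaluated on $(\bos{p}\otimes\bos{q})^\da$; the paper's notation $\bos{p}\otimes\bos{q}$ is read as the sorted version. Another is that the degenerate regime $\alpha\to 0$ must be checked separately, since there $\phi_\alpha$ may fail to be strictly concave at $0$. For $\alpha=0$ the argument may only give non-strict inequality.
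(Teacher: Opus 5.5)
Your route is the paper's own: Lemma~\ref{lem:subadditive_tensor}, then Lemma~\ref{lem:ho_verdu_renyi}, then additivity of $H_\alpha$ under $\otimes$. Your non-equality argument is correct and fills in a step the paper leaves implicit: $\bos{p}\otimes\bos{q}=\bos{p}\wedge\bos{q}$ would force $\min\{p_1,q_1\}=p_1q_1$, so $\bos{p}=\bos{e}$ or $\bos{q}=\bos{e}$.

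The genuine gap, shared with the paper's argument, is Step~1 for $\alpha\in(0,1)$. As literally written, $\Delta_{\phi_\alpha}(x,y)=(x-y)\phi_\alpha'(x)+\phi_\alpha(y)-\phi_\alpha(x)\le 0$ by concavity of $\phi_\alpha$, so your Step~3 claim $W_{\phi_\alpha}>0$ fails. The convention that gives $W_{\phi_\alpha}\ge 0$ and reduces to $D(\bos{q}\parallel\bos{p})$ as $\alpha\to 1$ uses $\phi_\alpha'(y)$ in place of $\phi_\alpha'(x)$. With that convention, however, Lemma~\ref{lem:ho_verdu_renyi} is false for every $\alpha\in(0,1)$. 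Write $X$ and $Y$ for $\|\cdot\|_\alpha^\alpha$ of the majorized and majorizing vectors. For uniform $\bos{p}$, Lemma~\ref{lem:ho_verdu_tsallis} is tight, $W_{\phi_\alpha}(\bos{q}\parallel\bos{p})=(X-Y)/(1-\alpha)$. The R\'enyi gap in nats is $\ln(X/Y)/(1-\alpha)$, which is strictly smaller since $X>Y\ge 1$. For instance, $\bos{p}=(\tfrac{1}{4},\tfrac{1}{4},\tfrac{1}{4},\tfrac{1}{4})\prec\bos{q}=(\tfrac{1}{2},\tfrac{1}{2},0,0)$ at $\alpha=\tfrac{1}{2}$ gives a gap of $\ln 2\approx 0.69$ nats against $W_{\phi_{1/2}}(\bos{q}\parallel\bos{p})=4-2\sqrt{2}\approx 1.17$. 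With this convention the corollary's bound itself fails too. Take the incomparable pair $\bos{p}=(0.6,0.2,0.2)$, $\bos{q}=(0.5,0.4,0.1)$, so $\bos{p}\wedge\bos{q}=(0.5,0.3,0.2)$, at $\alpha=\tfrac{1}{2}$. Then $H_\alpha(\bos{p})+H_\alpha(\bos{q})-H_\alpha(\bos{p}\wedge\bos{q})\approx 0.97$ nats, while $W_{\phi_{1/2}}(\bos{p}\wedge\bos{q}\parallel\bos{p}\otimes\bos{q})\approx 1.52$ nats. So the trouble you sensed near $\alpha=0$ concerns the inequality itself, not only strictness, and the justification of Step~1 breaks down on all of $(0,1)$.

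What survives is the following. For $\alpha>1$ your proof is sound with the corrected $\Delta_{\phi_\alpha}$. There $X\le Y\le 1$ and $\ln(Y/X)\ge (Y-X)/Y\ge Y-X$, so the R\'enyi gap dominates the Tsallis gap, which dominates $W_{\phi_\alpha}$ by Lemma~\ref{lem:ho_verdu_tsallis}; $\alpha=1$ is the Shannon case. For $\alpha\in(0,1)$ the same chain, now with $\ln(X/Y)\ge (X-Y)/X$, only yields $\eta=\log_2(e)\,W_{\phi_\alpha}(\bos{p}\wedge\bos{q}\parallel\bos{p}\otimes\bos{q})/(\|\bos{p}\|_\alpha^\alpha\|\bos{q}\|_\alpha^\alpha)$. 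This is still strictly positive for incomparable pairs by your Step~3 argument, so strict subadditivity does hold for all $\alpha>0$. It also follows directly from strict Schur-concavity of $H_\alpha$ together with $\bos{p}\otimes\bos{q}\neq\bos{p}\wedge\bos{q}$. At $\alpha=0$, use $|\operatorname{supp}(\bos{p}\wedge\bos{q})|=\max\{|\operatorname{supp}\bos{p}|,|\operatorname{supp}\bos{q}|\}$, which gives strictness unless one of the vectors is $\bos{e}$.
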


Therefore, contrarily to the Tsallis case, Corollaries~\ref{cor:renyi_supermodularity} and~\ref{cor:renyi_subadditivity} suggest that the Shannon entropy (R\'enyi entropy for $\alpha \rightarrow 1$) is the ``last" function of the R\'enyi family of entropies that satisfies both the supermodularity ($\alpha > 1$) and subadditivity ($\alpha \geq 0$) properties. While we do not prove explicitly that no value of $\alpha$ below 1 allows for supermodularity ($\alpha = 0$ being a trivial case), there are plenty of numerical counterexamples. Note that this does not mean that submodularity is satisfied either, but that in general no conclusion can be drawn for the value of $H_\alpha(\bos m) + H_\alpha(\bos j) - H_\alpha(\bos p) - H_\alpha(\bos q)$, as it can be greater than or less than 0 for $\alpha \in (0, 1)$.

\section{Conclusion} \label{sec:conclusion}

In this work, we have shown three majorization precursors to supermodularity and subadditivity on the majorization lattice. Moreover, we have shown that the family of sum-convex/sum-concave functions, which have long been known to be the simplest form of Schur-convex/concave functions, are also all submodular/supermodular on the majorization lattice. This opens the door for supermodular optimization schemes on the majorization lattice for the whole class of sum-concave functions, which are fundamental functions in majorization theory.

We have also combined our precursors with refinements on the Schur-concavity of Shannon, Tsallis and R\'enyi entropies from~\cite{HoVerdu2010, HoVerdu2015} to prove better bounds on the supermodularity of Shannon and Tsallis entropies on the majorization lattice, and on the subadditivity of Shannon, R\'enyi and Tsallis entropies on the majorization lattice, essentially showing that in all of those cases, the inequality can be replaced with a strict inequality. We also provide an expression for the correction term in all of these cases. In particular, the strict supermodularity of Shannon and Tsallis entropies on the majorization lattice is of interest. Many results regarding the optimization of supermodular functions are better behaved under strict supermodularity. Notably, the set of argmax of a strictly supermodular function forms a chain (i.e. a totally ordered set~\cite{Davey2002}), instead of only being a sublattice (see Theorems 2.7.1 and 2.7.5 in Ref.~\cite{Topkis1998}).

These properties are also of interest in quantum information, since the Shannon entropy and R\'enyi entropies are often useful resource quantifiers. Indeed, the Shannon entropy of Schmidt coefficients fully characterizes asymptotic convertibility~\cite{Bennett1996_bis} of entangled states, while R\'enyi entropies fully describe trumping majorization and thus entanglement catalysis~\cite{Turgut2007, Klimesh2007}.
We thus expect our new relations on the submodularity and supermodularity of entropies to play an interesting role for the characterization of entanglement transformations.

\section*{Note added}

During completion of this work, we became aware of two independent recent works where similar results have been obtained regarding the subadditivity and supermodularity of R\'enyi entropies on the majorization lattice, albeit using  different methods~\cite{Yadav2026, Bruno2026}.

\section*{Acknowledgments}

A.S. acknowledges support from the Université libre de Bruxelles (Belgium) under the Fonds de promotion du doctorat.
M.G.J. acknowledges funding from l’Agence Nationale de la Recherche (ANR, France) under project ANR-25‑CE47‑4015.
S.D. is a FRIA grantee of the Fonds de la Recherche Scientifique – FNRS (Belgium).
N.J.C. acknowledges support by the Fonds de la Recherche Scientifique – FNRS (Belgium) under project CHEQS within the Excellence of Science (EOS) program.

\bibliographystyle{IEEEtran}
\bibliography{info}

\end{document}